\newcommand\scalemath[2]{\scalebox{#1}{\mbox{\ensuremath{\displaystyle #2}}}}
\theoremstyle{plain}
\newtheorem{theorem}{Theorem}[section]
\newtheorem{proposition}[theorem]{Proposition}
\newtheorem{corollary}[theorem]{Corollary}
\newtheorem{remark}[theorem]{Remark}
\newtheorem{definition}[theorem]{Definition}
\theoremstyle{definition}
\theoremstyle{remark}
\numberwithin{equation}{section}
\title{\textsc{Self-contact preventing energy for tubular rods}}
\author{ \textsc{C.\ Lonati}$^1$\thanks{\href{mailto:chiara.lonati01@universitadipavia.it}{\texttt{chiara.lonati01@universitadipavia.it}}}\,\,\,$-$\,\, \textsc{A.\  Marzocchi}$^2$\thanks{\href{mailto:alfredo.marzocchi@unicatt.it}{
			\texttt{alfredo.marzocchi@unicatt.it}}}
	\bigskip\\
	\normalsize$^1$ Dipartimento di Matematica ``F. Casorati'', Università di Pavia, via Ferrata 5, I-27100 Pavia, Italy.\\
	\normalsize$^2$ Dipartimento di Matematica e Fisica ``N. Tartaglia", Università Cattolica del Sacro Cuore,\\
	\normalsize via della Garzetta 48, I-25133 Brescia, Italy.\\
}
\date{}
\begin{document}
	
	\maketitle
	\begin{abstract}
		\noindent We introduce a generalization of M\"obius energy for knots to an energy functional for tubular neighbourhoods of closed inextensible curves. We prove the continuity of the energy and its boundedness for physically admissible tubes without self-contact and in particular for the torus. The functional allows to distinguish isotopy classes of the centerline through its physical inspiration to a self-repulsive electrostatic energy. If the tube has zero thickness, O'Hara's functional is recovered. Finally, a discussion on the possible exponents in the functional is carried out.
	\end{abstract}
	
	\noindent \textbf{Keywords} Self-contact, interpenetration of matter, M\" obius energy, tubular neighbourhoods\\
	\textbf{MSCcodes} 57K10, 70G75, 74K10, 74M15

	\section{Introduction}
	In the theory of knots \cite{adams1994knot,burde2003math,crowell2012introduction}, it is customary to introduce configurations avoiding self-intersection points, the so-called multiple points. This is both a requirement and a necessary condition to subdivide all possible configurations into disjoint classes, named ambient isotopy classes.\\
	For this reason, O'Hara \cite{o1991energy} defined an energy functional on a knot, called M\"obius energy, which prevents self-intersections distinguishing different types of knots and that was proved \cite{freedman1994mobius} to generate a minimum energy configuration where a sort of ``optimal distance with itself'' is obtained. Though not properly physical (the case of electrostatic energy was not contemplated in the theory), O'Hara ideas were very fruitful and led to many further results, like the fact that the circle actually reaches the absolute minimum and the proof of the existence of a minimum energy configuration of the knot in each isotopy class \cite{freedman1994mobius}, in addition to some generalizations of M\"obius energy for knots and the study of minimizers for its coupling with an elastic energy on the knot \cite{o1992family,kim1993torus,o1994energy,buck1995simple,simon1996energy,von1998minimizing,o2008energy,blatt2013stationary}.\\
	O'Hara's functional contains a term $1/|x-y|^\alpha$, where $|\cdot|$ is the Euclidean distance between points on the knot, which diverges as $x$ tends to $y$, subtracted by a term $1/d(x,y)^\alpha$ where $d$ is the distance between $x$ and $y$ on the knot, which diverges too as $x$ approaches $y$. Nevertheless, the difference is convergent as $x\to y$, so that it is possible to integrate it over all possible couples of points on the knot and, when there are no multiple points, get a finite integral. When the knot has multiple points, however, $|x-y|$ tends to zero but $d(x,y)$ does not, so that the integral diverges and hence a knot with multiple points cannot be a minimizer. The exponent $\alpha$ must belong to a precise interval in $\mathbb{R}$ to have a well-defined energy for knots. For $\alpha=1$ one obtains the classical Coulomb electrostatic potential, that however does not seem to provide minimizers for O'Hara's functional.\\
	The aim of this paper is to generalize some features of this theory to three-dimensional objects which are ``close'' to a knot, hopefully easier to use for real physical experiments and more suitable for applications, like modelling of proteins or DNA folding or the description of charged filaments. Indeed, several studies approach the notion of ``thickness'' of a knot: in \cite{gonzalez1999global,schuricht2004characterization} this concept is related to the notions of ideal knot and global radius of curvature and in \cite{von1999elastic} a small thickness is assigned to a knot in form of an obstacle, in order to minimize the curvature functional on separated isotopy classes. In \cite{kusnert1998distortion,litherland1999thickness}, the thickness is related, as in O'Hara's works, to the distortion and the so called rope-length of a knot. A tentative generalization of O'Hara's approach was carried out in \cite{rawdon2002mobius}; however, there the thickness of the knot is not considered in the formulation of the energy, but is used to find bounds for M\"obius energy. An interesting general approach for three-dimensional surfaces is also proposed in \cite{kusner1994mobius}; however, we want here to explicitly describe the functional taking into account the non-vanishing thickness of the elastic filament.\\
	In many problems involving deformable rods \cite{antman,Schuricht_2002,gonzalez2002global,schuricht2003euler,chamekh2009modeling} as well as those joint with minimal surfaces, the so-called Kirchhoff-Plateau problem \cite{giusteri2016instability,giusteri2017solution,de2019anisotropic,bevilacqua2018soap,bevilacqua2019soap,bevilacqua2020dimensional,BBLM22,bevilacqua2023effects}, an appropriate choice of the elastic energy term leads to the existence of a minimum energy configuration, but non-interpenetration of matter must be imposed implicitly on the space of all configurations. Many studies were carried out to express non-interpenetration and self-contact conditions for elastic loops \cite{lonati2023selfcontact}, both from a local point of view, \cite{chamekh2009modeling,mlika2018nitsche,chamekh2020frictional}, and from a global one where the formulation does not involve quantities described through local geometric quantities, as for example in \cite{Ciarlet_1987,gonzalez2002global,Schuricht_2002}.
	However, it is almost impossible to perform a priori numerical experiments on the solution, since the space of admissible configurations depends on possible points of contact, i.e. on the solution itself. A theory in which impossibility of compenetration is automatically necessary would be welcome at least in this sense. In particular, we will consider in this treatment tubular neighbourhoods of closed curves with fixed length, that are a suitable first approximation for inextensible, unshearable closed Kirchhoff rods, for which many approaches were used to treat self-contact and non-interpenetration.\\
	Three-dimensional objects like tubes of finite thickness have the advantage that overlapping of points appears before their midlines self-intersect, but are complicated to treat since the form of the boundary comes into play. Furthermore, there is no easy or apparent equivalent of O'Hara's distance $d(x,y)$ on the surface of the tube which blows up at the same order of the Euclidean distance. This equivalent quantity should of course tend to O'Hara distance as the cross-section of the tube shrinks to its midline, and should be reasonably computable in the easiest case of the torus.\\
	As in \cite{o1991energy}, where O'Hara partitioned the space of knots into isotopy classes separated by ``walls'' with infinite M\"obius energy, our goal is the possibility of distinguishing ambient isotopy classes of the centerline of the corresponding tubular neighbourhoods. Moreover, we would like to split them into configurations without self-contact and configurations with self-contact or interpenetration of matter, in order to detect the physically admissible ones. The main improvement from O'Hara's works and its extensions (\cite{o1991energy,o1992family,kim1993torus,o1994energy,freedman1994mobius,buck1995simple,simon1996energy,o2008energy,von1998minimizing,blatt2013stationary}) is indeed the introduction of the rod thickness.\\
	In this article we propose and discuss a function, denoted with $d^*(x,y)$, which meets these two requirements. The function is explicitly defined for a tubular neighbourhood of a closed knot and the functional is shown to be finite if and only if the tube does not present self-contact or interpenetration. Moreover, we compute its value for a torus and show some properties as boundedness from below, continuity, divergence to infinity only in the case of self-contact or not admissible configurations and we prove that, for a cross-section with vanishing thickness, we recover O'Hara's functional.\\
	A remarkable property that arises when dealing with tubular neighbourhoods is that the torsion of the midline plays a special role in the expression of $|x-y|$ and in the definition of $d^*$. So, differently with O'Hara's theory, also higher derivatives of the midline are necessary to define the functional. As above, the case $\alpha=1$ that corresponds to the  Coulomb electrostatic potential does not provide a divergent functional, but we notice that this is not a true electrostatic energy.\\
	The plan of the paper is as follows.
	In Section 2 we introduce the model through classical knot theory and we define our object. In Section 3 we highlight some geometrical properties of a generic tubular neighbourhood and we define the quantities necessary for the definition of the energy functional. In Section 4 we proceed with the definition of the functional and in Section 5 we prove the fundamental properties of the model, similar to the ones proved for O'Hara's functional in \cite{o1991energy,freedman1994mobius}. Finally, in Section 6 we show how one can easily recover the classical M\"obius energy and in Section 7 we carry out a discussion on the suitable exponent for the terms contained in the functional, finding a result analogous to the one contained in \cite{freedman1994mobius}.
	\section{Knot theory approach}
	We start recalling some basic definitions that will be useful in what follows.  For the standard definitions on knots, which are homeomorphisms $S^1\rightarrow \mathbb{E}^3$, where $\mathbb{E}^3$ is the usual affine space of points, we refer to \cite{adams1994knot,burde2003math,crowell2012introduction}.
	The main ingredient for a knot theory approach to the problem is the concept of equivalence between two knots $K_0$ and $K_1$: the ambient isotopy, which is stronger than  simple isotopy.\\ As explained in detail in \cite{burde2003math}, two knots can be isotopic although they are different with regard to their knottedness: they are indeed both homeomorphic to the unit circle, and so to each other, because  any region where knotting occurs can be contracted continuously to a point. An ambient isotopy, instead, takes into account also the neighbouring points of the knot in the ``movement'' from $K_0$ to $K_1$.\\
	An ambient isotopy is intuitively defined as a rearrangement of a knot in $\mathbb{R}^3$ without letting it pass through itself. Thus, as detailed in \cite{adams1994knot}, p.12, as a simple clarifying example, one is not allowed to shrink a part of the knot to a point. The following definition is contained in \cite{crowell2012introduction}.
	\begin{definition}
		\label{ho}
		Two knots $K_0$ and $K_1$ are \emph{ambient isotopic}
		if there exists an orientation-preserving continuous map $H:\mathbb{E}^3\times [0,1]\rightarrow \mathbb{E}^3$, such that for every fixed $t\in[0,1]$, the map $H_t:\mathbb{E}^3\rightarrow\mathbb{E}^3, \{
		x\mapsto H(x,t)\}$
		is a homeomorphism, with $H_0=\operatorname{Id}_{\mathbb{E}^3}$ and $H_1\circ K_0=K_1$ and with $x$ a point of the knot.
	\end{definition}
	The letter $t$ suggests that $[0,1]$ can be seen as a sort of ``time'' interval: indeed, the point $H_t(x)$ traces the path of the point $x$ during the ``motion'' of the knot from its initial position on $K_0$ to its final position on $K_1$.
	\begin{definition}
		A knot that is parametrized by an arc-length of class $C^1$ is called tame. Every tame knot is ambient isotopic to a polygonal knot.
	\end{definition}
	We will always consider only tame curves, because we will require an even higher regularity.
	\begin{definition}
		A singular knot is a smooth map $f:S^1 \rightarrow\mathbb{E}^3$ whose image has multiple points.
	\end{definition}
	Let $\mathcal{M}$ be the set of immersions of $S^1$ into $\mathbb{E}^3$ and $$\mathcal{K}=\{f:S^1\rightarrow \mathbb{E}^3: f \text{ is an homeomorphism}\}$$ the set of all knots. The set $\Sigma=\mathcal{M}\setminus\mathcal{K}$ is called \emph{discriminant set} and consists of all singular knots. It can be shown \cite{o2008energy} that two knots in $\mathcal{K}$ lie in the same ambient isotopy class if and only if they can be joined by a path in $\mathcal{K}$ that does not intersect $\Sigma$. Therefore, knot types are in one-to-one correspondence with the connected components, called \emph{cells}, of $\mathcal{M}/\Sigma$.  We will consider knots of given finite length $L$ and the arc-length describing the knot varying in the interval $[0,L]$. 
	In our treatment we will work with a tubular neighbourhood of the midline curve to approach the description of closed, inextensible and unshearable rods made in \cite{antman,Schuricht_2002}: we consider a closed curve $\gamma$ in $\mathbb{R}^3$ of fixed length $L$ parametrized by the arc-length parameter $s\in [0,L]$. We assume that this curve is of class $C^3([0,L];\mathbb{E}^3)$, so that the second derivative is well-defined and continuous as well as the standard Serret-Frenet frame $(\boldsymbol{t},\boldsymbol{n},\boldsymbol{b})$. The closure conditions read $\gamma(0)=\gamma(L),\,\boldsymbol{t}(0)=\boldsymbol{t}(L),\, \boldsymbol{n}(0)=\boldsymbol{n}(L) $, which obviously imply $\boldsymbol{b}(0)=\boldsymbol{b}(L)$.\\
	We will denote a closed circular tubular neighbourhood with radius $r>0$ of a curve $\gamma$ the set 
	\begin{equation}
		T_r[\gamma]=\{P\in \mathbb{E}^3: |P-Q|\leq r, Q\in \gamma\}
	\end{equation}
	where $|P-Q|$ is the Euclidean distance between two points in $\mathbb{E}^3$. In the sequel we will only write $T_r$.\\
	We denote with $p$ the mapping from a subset $\mathbb{R}^3$ to $\mathbb{E}^3$ that associates to $(\rho,s,\theta)$ the point $P$ corresponding to the coordinates $(\rho,s,\theta)$. 
	Thus, any point belonging to $T_r$ verifies 
	\begin{equation}\label{eq:Pdsth}
		p(\rho,s,\theta)-\gamma(s)=\rho\cos\theta\, \boldsymbol{n}(s)+\rho\sin\theta\, \boldsymbol{b}(s).
	\end{equation}
	where $s\in [0,L]$, $\rho\in [0,r]$ is the distance from the centerline in the plane determined by $\boldsymbol{n}(s)$ and $\boldsymbol{b}(s)$ and $\theta\in [0,2\pi]$ is the angle formed with the normal $\boldsymbol{n}(s)$. We notice that thanks to the regularity of $\gamma$ the mapping $p$ turns out to be continuous, surjective and invertible. We will use the notation $\Omega$ for the set $[0,+\infty[\times [0,L]\times [0,2\pi]$ and so $p(\Omega)$ will describe the shape of the tube in space.
	\\
	The boundary $\operatorname{bd} T_r$ of $T_r$ is the set of points where $r>0$ is fixed:
	\begin{equation}
		\operatorname{bd}T_r=\{P=\gamma(s)+r \cos \theta \boldsymbol{n}(s)+r \sin \theta \boldsymbol{b}(s), s\in[0,L], \theta\in [0,2\pi]\}
	\end{equation}
	while the interior part $\operatorname{int} T_r$ of $T_r$ is clearly the set of points 
	\begin{equation}\operatorname{int} T_r=\{P=\gamma(s)+\rho\cos\theta \boldsymbol{n}(s)+\rho\sin\theta \boldsymbol{b}(s),  \rho\in [0,r[, s\in[0,L], \theta\in [0,2\pi]\}
	\end{equation}
	A representation of the above definitions is shown in Figure \ref{tube}.\\
	\begin{figure}[htp]
		\centering
		\includegraphics[width=0.65\textwidth]{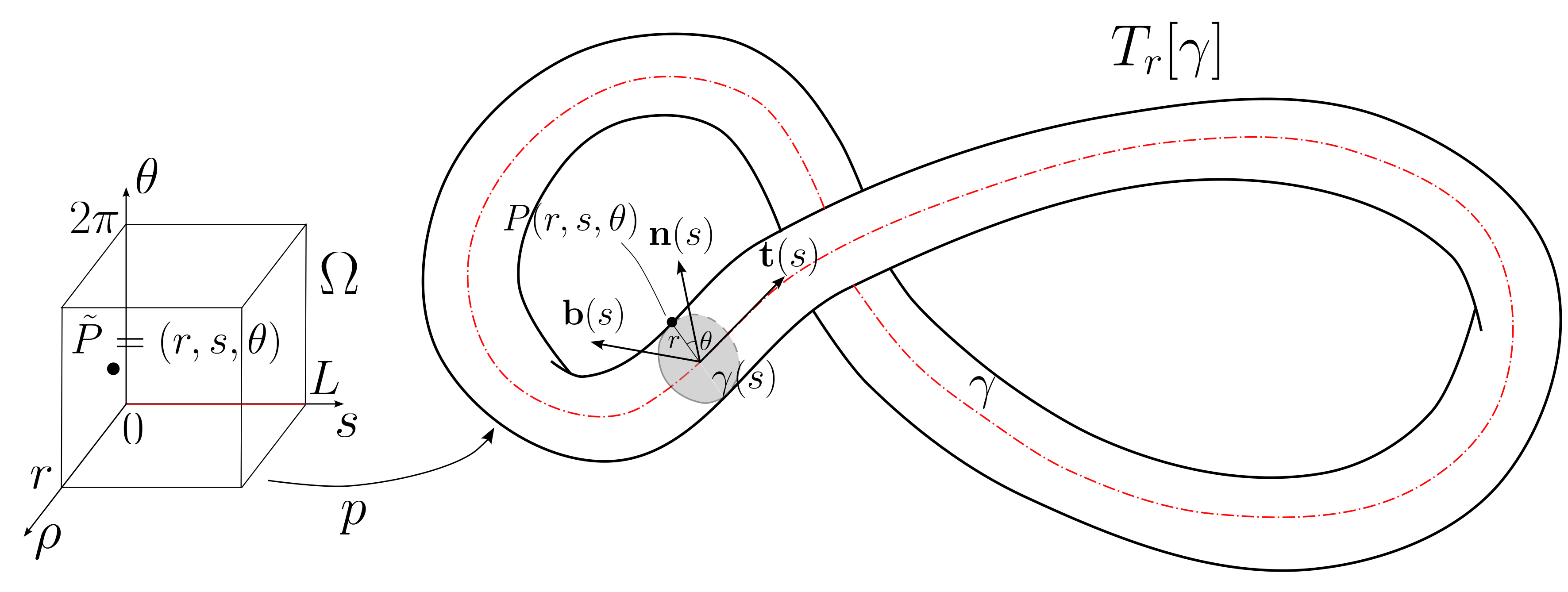}
		\caption{Representation of the map $p$ from the set $\Omega$ to $p(\Omega)$. A point $\Tilde{P}$ is mapped into $P\in p(\Omega)$, defined as above. In particular, here $P\in \operatorname{bd}T_r$.}
		\label{tube}
	\end{figure}
	\begin{definition}\label{def:interp1}
		A tubular neighbourhood of a curve $\gamma$ presents \emph{interpenetration of matter} if and only if there exist at least two distinct triples $x$ and $y$ in $\Omega$ with $p(x)=p(y)$ and at least one of the two points belongs to $\operatorname{int}T_r$. 
	\end{definition}
	\begin{definition}\label{def:selfcontact}
		A tubular neighbourhood of a curve $\gamma$ presents \emph{self-contact} if and only if there exist two distinct triples $x$ and $y$ in $\Omega$ with $p(x)=p(y)$ and $p(x),p(y)\in \operatorname{bd}T_r$. 
	\end{definition}
	\begin{remark}
	\emph{The boundary of the tubular neighbourhood is not in general the physical boundary of the object: this is true when there is no interpenetration. Indeed, when interpenetration occurs, the physical boundary is only a subset of the boundary of the tubular neighbourhood, because some parts of the boundary are contained into the interior part of $T_r$.}
	\end{remark}
	We recall now some known properties of tubular neighbourhoods.	\begin{proposition}\label{prop:tubular} If $\gamma$ is a curve of class $C^1$, then for every $r\geq0$, $T_r[\gamma]$ is the union, as $s$ ranges in $[0,L]$, of the closed spheres $B(\gamma(s),r)$ with centers $\gamma(s)$ and radius $r$ and also the union of all the full circles $\Gamma_r(s)$ obtained intersecating $B(\gamma(s),r)$ with the normal plane to $\gamma$ at $\gamma(s)$. If $\gamma$ has no singularities and $k=\max_{[0,L]}{\kappa(s)}<+\infty$ is the maximum curvature of $\gamma$, then there exists $r>0$ with $rk<1$ such that the boundary $\partial T_r[\gamma]$ of $T_r[\gamma]$ is the disjoint union of the boundaries $\partial\Gamma_r(s)$ of $\Gamma_r(s)$. In this case $\partial T_r[\gamma]$ is at least of class $C^1$, so it admits a tangent plane at every point. Conversely, if there exists $s\in [0,L]$ such that $r\kappa(s)>1$, then there exists $x\in\operatorname{int}\Omega$ and an open ball centered at $x$ such that $\operatorname{det}\nabla p(x)<0$ in that ball.    
	\end{proposition}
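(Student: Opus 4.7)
The plan is to verify the three assertions in turn. The first equality $T_r=\bigcup_s B(\gamma(s),r)$ is the definition of $T_r$ with the existential on $Q$ made explicit. For the representation as a union of normal disks $\Gamma_r(s)$, only the inclusion $T_r\subseteq\bigcup_s\Gamma_r(s)$ is non-trivial: given $P\in T_r$, minimize $f(s):=|P-\gamma(s)|^2$ on the compact loop $[0,L]$ (identifying $0$ with $L$ via the closure conditions). At a minimizer $s^\ast$ stationarity yields $(P-\gamma(s^\ast))\cdot\boldsymbol t(s^\ast)=0$, so $P$ lies in the plane through $\gamma(s^\ast)$ orthogonal to $\boldsymbol t(s^\ast)$, and $|P-\gamma(s^\ast)|\le r$ then forces $P\in\Gamma_r(s^\ast)$.

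For the second assertion I would combine a local immersion argument with a global separation argument. Applying the Serret--Frenet formulae $\boldsymbol n'=-\kappa\boldsymbol t+\tau\boldsymbol b$ and $\boldsymbol b'=-\tau\boldsymbol n$ to $p(\rho,s,\theta)=\gamma(s)+\rho\cos\theta\,\boldsymbol n(s)+\rho\sin\theta\,\boldsymbol b(s)$ and computing the Jacobian gives
\begin{equation*}
\det\nabla p(\rho,s,\theta)=\rho\bigl(1-\rho\,\kappa(s)\cos\theta\bigr),
\end{equation*}
which is strictly positive whenever $0<\rho<1/k$, so $p$ is a local diffeomorphism there. By the inverse function theorem, normal disks $\Gamma_r(s)$ for parameters close in $s$ cannot overlap when $rk<1$. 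Globally, since $\gamma$ is an injective $C^1$-curve on a compact parameter interval, for any fixed $\eta>0$ there is a positive lower bound on $|\gamma(s_1)-\gamma(s_2)|$ for arc-length distance at least $\eta$. Choosing $r$ smaller than both $1/k$ and half of this lower bound (for $\eta$ compatible with the local step) makes the closed disks $\Gamma_r(s)$ pairwise disjoint, so their boundary circles form a disjoint cover of $\partial T_r$; the $C^1$-regularity of $\partial T_r$ then follows from that of $p|_{\rho=r}$, whose Jacobian in $(s,\theta)$ is non-degenerate under the same bound on $r$.

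For the converse, the same identity does all the work: if $r\kappa(s^\ast)>1$ at some $s^\ast\in[0,L]$, pick $\rho^\ast\in(1/\kappa(s^\ast),r)$ and let $x:=(\rho^\ast,s^\ast,0)\in\operatorname{int}\Omega$. Then $\det\nabla p(x)=\rho^\ast\bigl(1-\rho^\ast\kappa(s^\ast)\bigr)<0$, and continuity of $\kappa$ and of the determinant let this strict inequality persist on a small open ball centered at $x$.

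The main obstacle, in my view, lies in the global part of the second assertion: while the local immersion bound $\rho k<1$ reduces directly to the Jacobian computation, the disjointness of disks at far-apart parameters is a compactness argument relying genuinely on the injectivity of the embedded curve $\gamma$, and some care is needed at the closure $s\equiv 0\pmod L$ in order to use the periodic identification of the Serret--Frenet frame. Everything else is routine bookkeeping.
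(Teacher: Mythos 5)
The paper does not actually prove this proposition: it is recalled as a known property of tubular neighbourhoods, with only a citation to Schuricht (2002) for the last assertion, so your write-up is supplying arguments the authors omit rather than paralleling theirs. What you supply is the standard tubular-neighbourhood argument and it is essentially sound: the first-variation argument for $T_r=\bigcup_s\Gamma_r(s)$ is clean (closedness of $\gamma$ removes the endpoint issue in the minimization), the Jacobian $\det\nabla p=\rho\,(1-\rho\kappa(s)\cos\theta)$ is correct up to the orientation convention the paper itself uses, and the local-immersion-plus-global-separation scheme is the right way to get disjointness of the normal disks.

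Three points deserve tightening. First, the inverse function theorem cannot be applied to $p$ in the coordinates $(\rho,s,\theta)$ near $\rho=0$, where the Jacobian vanishes and $p(0,s,\cdot)$ collapses a whole circle to $\gamma(s)$; since disjointness of the \emph{full} disks $\Gamma_r(s)$ must be checked down to the centerline, you should run the local step with the nondegenerate normal-coordinate map $q(s,v)=\gamma(s)+v_1\boldsymbol n(s)+v_2\boldsymbol b(s)$, whose Jacobian is $1-\kappa(s)v_1\geq 1-rk>0$, and then invoke uniform local injectivity of an immersion on a compact set (the usual ``otherwise take sequences $x_n\neq y_n$ with $q(x_n)=q(y_n)$ and pass to a limit'' argument) to get the uniform $\eta$ you need. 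Second, passing from ``the disks are pairwise disjoint'' to ``$\partial T_r=\bigsqcup_s\partial\Gamma_r(s)$'' is asserted but not argued: you still need that a point with $\operatorname{dist}(P,\gamma)=r$ is genuinely a boundary point of $T_r$, i.e.\ that moving radially outward in the normal plane at the unique foot point increases the distance to all of $\gamma$; this follows from the unique-nearest-point property established by the disjointness, but it is a separate line. Third, in the converse your chosen point $x=(\rho^\ast,s^\ast,0)$ lies on $\partial\Omega$ because $\theta=0$ is an endpoint of $[0,2\pi]$ (and possibly $s^\ast\in\{0,L\}$); take $\theta$ small and positive, where $1-\rho^\ast\kappa(s^\ast)\cos\theta$ is still negative, and use periodicity in $s$. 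None of these is a conceptual obstacle, but as written the local step is the one place where the argument, taken literally, fails.
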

	The last part of the Proposition is developed in \cite{Schuricht_2002}.
	\\The situation where $r\kappa(s)>1$ will be called \emph{local interpenetration of matter}, that we want to avoid, so the configurations with $r\kappa(s)>1$ for an $s$ are considered not physically admissible. Clearly, it is an interpenetration of matter in the sense of Definition \ref{def:interp1}, but not the only possible one. Here, the  superposition of points is due to excessive bending and the points are in some sense near (that is, if the bending were a process, then there would be arbitrarily near points which were not superimposed before the critical bending and superimposed after it). We are more interested to rule out points far away on the boundary but which do come in superposition.\\
	If $T_r$ does not present interpenetration of matter, then $p$ is injective (this is sometimes called global injectivity, to distinguish it from the local conservation of orientation $\operatorname{det}\nabla p(x)>0$). We remark here that many conditions and formulations were studied to express the global injectivity of $p$, see \cite{lonati2023selfcontact} for details: while many conditions, mainly used for numerical examples, involve local expressions, i.e. local radius of curvature, or geometric characteristics expressed using the arc-length $s$ \cite{chamekh2009modeling,mlika2018nitsche,chamekh2020frictional}, some of them define global formulations on the object, as for example \cite{Ciarlet_1987,gonzalez2002global,Schuricht_2002}.\\
	The next result can be quite easily proved by continuity arguments. 
	\begin{theorem}
		\label{111}
		Suppose $H:\mathbb{E}^3\times[0,1]\to\mathbb{E}^3$ is a map that deforms continuously the points of a curve $\gamma$ as in Definition \ref{ho} and let $H$ be such that $H(\gamma,0)$ and $H(\gamma,1)$ belong to different ambient isotopy classes. Suppose moreover that $H(\gamma,t)$ verifies at every $t$ the conditions of Proposition \ref{prop:tubular} so that the tubular neighbourhood of $H(\gamma,t)$ of radius $r$ has always a regular boundary and it never has local interpenetration of matter. Then there exists $t^*\in]0,1[$ such that there is self-contact on $\operatorname{bd} T_r[H(\gamma,t^*)]$ but not for $t<t^*$. Furthermore, there exists $\tau\in]0,1[,\tau>t^*$, such that $T_r[H(\gamma,\tau)]$ has interpenetration of matter, and this does not happen for $t<t^*$.
	\end{theorem}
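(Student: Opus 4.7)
Paragraph 1. Since $H(\gamma,0)$ and $H(\gamma,1)$ lie in different ambient isotopy classes, the characterization of knot types as cells of $\mathcal{M}/\Sigma$ recalled in Section 2 forces the continuous path $t\mapsto H(\gamma,t)$ to meet $\Sigma$. The plan is to extract a first crossing $t_0\in\,]0,1[$ at which $H(\gamma,t_0)(s_1)=H(\gamma,t_0)(s_2)$ for some $s_1\ne s_2$. At $t_0$ the two triples $(0,s_1,0)$ and $(0,s_2,0)$ share their image under $p$ with $\rho=0<r$, so Definition \ref{def:interp1} already yields interpenetration; moreover the two boundary circles of radius $r$ in the distinct normal planes through the common center $H(\gamma,t_0)(s_1)=H(\gamma,t_0)(s_2)$ must intersect (two concentric circles in non-parallel planes share the two endpoints of the common line), so self-contact also occurs at $t_0$.

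Paragraph 2. Next, introduce $\mathcal{B}=\{t\in[0,1]:T_r[H(\gamma,t)]\text{ has no self-contact}\}$. Under the hypotheses of no local interpenetration and regular boundary from Proposition \ref{prop:tubular}, the restriction of $p_t$ to the compact boundary parameter torus is an immersion; at $t_1\in\mathcal{B}$ it is additionally injective and hence an embedding, so the Euclidean distance $|p_{t_1}(x)-p_{t_1}(y)|$ between non-identified boundary pairs is bounded below by a positive constant by compactness, and continuity of $H$ in $t$ preserves this bound on an open neighborhood of $t_1$. Thus $\mathcal{B}$ is relatively open in $[0,1]$. Setting $t^*=\sup\{s:[0,s]\subset\mathcal{B}\}$, admissibility of the initial configuration gives $t^*>0$, closedness of the complement gives $t^*\notin\mathcal{B}$ (i.e.\ self-contact at $t^*$), and $t_0\notin\mathcal{B}$ forces $t^*\le t_0<1$. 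This produces $t^*\in\,]0,1[$ with self-contact at $t^*$ and none for $t<t^*$.

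Paragraph 3. The main obstacle in the plan is to exclude interpenetration on $[0,t^*[$ and to exhibit $\tau\in\,]t^*,1[$ with interpenetration. For the first, I would argue that on $[0,t^*[$ the boundary $\operatorname{bd}T_r[H(\gamma,t)]$ is an embedded closed orientable surface by Proposition \ref{prop:tubular} and injectivity on the boundary, so by Jordan–Brouwer separation it bounds a unique compact region $U_t\subset\mathbb{E}^3$. Starting from the admissible initial configuration, where $p_0$ is a homeomorphism between $\{\rho\le r\}$ (with the identifications of Section 2) and $U_0$, a continuity-plus-local-diffeomorphism argument — essentially invariance of domain carried through a one-parameter family — shows that $p_t$ remains a homeomorphism between $\{\rho\le r\}$ and $U_t$ throughout $[0,t^*[$, so no two distinct triples can share an image and interpenetration is ruled out. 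For $\tau$, it suffices to take any time slightly greater than $t_0$: by continuity of $H$ and positivity of $r$, the interpenetration at $t_0$ persists on a right-neighborhood of $t_0$, since the cross-sectional disks of radius $r$ around the just-separated strands remain overlapping for small positive $\delta$, and this $\tau$ lies in $\,]t^*,1[$ as required.
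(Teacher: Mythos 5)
The paper never writes out a proof of Theorem \ref{111}: it only asserts that the result ``can be quite easily proved by continuity arguments'' and offers the one-sentence intuition that the midline must cross the discriminant set $\Sigma$, that the tube must touch itself before that instant, and that just before the crossing the tube must already overlap. Your proposal is exactly a fleshed-out version of that sketch --- a first crossing time $t_0$ of $\Sigma$, openness of the set $\mathcal{B}$ of self-contact-free times, $t^*$ defined as the supremum of the initial self-contact-free interval, and persistence of the overlap on a right-neighbourhood of $t_0$ --- so in approach you are fully aligned with the authors, and considerably more explicit than they are. Your observation that the two meridian circles through a double point of the centerline must meet (in the two points of the common line of the normal planes, or entirely if the tangents are parallel) is a correct and useful way to get self-contact at $t_0$.

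Two details need tightening. First, in Paragraph 2 the claimed positive lower bound on $|p_{t_1}(x)-p_{t_1}(y)|$ over ``non-identified boundary pairs'' is false as literally stated: when $p_{t_1}$ is injective every pair $x\neq y$ is non-identified, and the infimum over such pairs is $0$ (let $y\to x$). The standard repair is to use the no-local-interpenetration hypothesis $r\kappa<1$ to obtain a uniform injectivity radius $\delta>0$ on the parameter torus, and then take the minimum only over pairs at parameter distance at least $\delta$; that minimum is positive by compactness and is the quantity that persists under perturbation. Second, both the openness of $\mathcal{B}$ and the persistence of interpenetration past $t_0$ require $t\mapsto H(\gamma,t)$ to be continuous in a topology strong enough to control the Serret--Frenet frame (at least $C^1$, and in fact $C^3$ for $T_r$ to be defined at all), which Definition \ref{ho} alone does not supply; likewise, $t^*>0$ needs the unstated assumption that the initial tube is self-contact-free. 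These are implicit hypotheses you share with the paper rather than defects of your argument relative to theirs, but they should be stated. With these repairs the proof is sound.
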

	Intuitively, if the midline $H(\gamma,t)$ has to change ambient isotopy class, then it must intersect itself somewhere, but then its tubular neighbourhood must come into contact before that time, and then $t^*$ is just the first time this happens. Obviously, shortly before $H(\gamma,t)$ becomes singular but after a first self-contact, its tubular neighbourhood must have a region of superposition.
	\begin{corollary}
		\label{inj}
		A mapping $p$, that describes a neighbourhood with interpenetration of matter, is not injective on the set $\operatorname{bd} T_r$, i.e. there is at least a point of self-contact on $\operatorname{bd} T_r$.
	\end{corollary}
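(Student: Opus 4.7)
The plan is a direct geometric argument based on the cross-sectional disks of the tube, with a short continuity step in the spirit of Theorem \ref{111} to dispose of degenerate cases. Let $D_s := p([0,r]\times\{s\}\times[0,2\pi])$ denote the closed disk cross-section in the normal plane at $\gamma(s)$, so that $T_r=\bigcup_{s\in[0,L]} D_s$. By Definition \ref{def:interp1}, the assumption of interpenetration supplies distinct triples $x_1=(\rho_1,s_1,\theta_1)$ and $x_2=(\rho_2,s_2,\theta_2)$ in $\Omega$ with $p(x_1)=p(x_2)=P$ and $\rho_1<r$. Since $(\rho,\theta)\mapsto p(\rho,s,\theta)$ at fixed $s$ is essentially polar coordinates on a 2-disk and hence injective for $\rho>0$, I would first reduce to the generic situation $s_1\neq s_2$.

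The key step then exploits the structure of $I:=D_{s_1}\cap D_{s_2}$. When the two normal planes are distinct, they meet along a line $\ell$, and $I$ is a closed segment of $\ell$. Because $P\in I$ and $P\in\operatorname{int} D_{s_1}$, the segment $I$ has nonempty relative interior, so its two endpoints $A,B$ are distinct. Each endpoint must lie in $\partial D_{s_1}\cup\partial D_{s_2}$: starting at $P$ and travelling along $\ell$ in either direction, one exits $I$ exactly at the first crossing of the boundary of one of the two disks. Picking, for instance, $A\in\partial D_{s_1}$, one has $A=p(r,s_1,\theta_A)\in\operatorname{bd}T_r$, while the inclusion $A\in I\subset D_{s_2}$ gives a second preimage $A=p(\rho_A,s_2,\theta'_A)$; since $s_1\neq s_2$ the two triples are distinct, so Definition \ref{def:selfcontact} yields self-contact on $\operatorname{bd}T_r$.

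The main obstacle I expect is the treatment of degenerate configurations. If the two normal planes coincide (locally planar midline), the argument is carried out directly in 2D, where two coplanar disks whose intersection contains an interior point form a lens whose two corners lie on $\partial D_{s_1}\cap\partial D_{s_2}$, immediately producing two outer-layer preimages of the same point. The parallel-but-distinct case cannot occur, since such cross-sectional disks are disjoint and would contradict $P\in D_{s_1}\cap D_{s_2}$. The remaining borderline situations (such as $\rho_1=0$, $I$ reducing to a tangency point, or a chord endpoint coinciding with a disk centre) can all be removed by a small perturbation of $(s_1,s_2)$: by the local diffeomorphism property of $p$ ensured under the condition $r\kappa<1$ from Proposition \ref{prop:tubular}, the coincidence $p(x_1)=p(x_2)$ persists under small displacements of $x_1$, so one can always slide into the generic regime. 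This is precisely the kind of continuity mechanism that powers Theorem \ref{111}, and it is what makes the present statement a genuine corollary of that theorem.
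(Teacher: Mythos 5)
Your approach is genuinely different from the paper's: the paper offers no standalone proof of this corollary and obtains it as a consequence of the deformation argument in Theorem \ref{111} (self-contact on $\operatorname{bd}T_r$ must occur at some $t^*$ strictly before interpenetration at $\tau$ along any admissible homotopy), whereas you give a direct, static argument on a single configuration via the chord $I=D_{s_1}\cap D_{s_2}$. That is an attractive idea, and most of the geometry is right: the coplanar and parallel-plane cases are handled correctly, and the perturbation step for removing tangencies is legitimate since $p$ is a local diffeomorphism wherever $\rho>0$ and $r\kappa<1$.

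However, there is a genuine gap at the decisive step. The endpoint $A$ of the chord $I$ lies on $\partial D_{s_1}\cup\partial D_{s_2}$, not on $\partial D_{s_1}\cap\partial D_{s_2}$. If, say, $A\in\partial D_{s_1}\cap\operatorname{int}D_{s_2}$, you obtain the two preimages $(r,s_1,\theta_A)$ and $(\rho_A,s_2,\theta'_A)$ with $\rho_A<r$: only one of them has $\rho=r$. This satisfies the letter of Definition \ref{def:selfcontact} (the common image point does belong to $\operatorname{bd}T_r$), but it does not establish what the corollary actually asserts and what is used downstream, namely that $p$ fails to be injective \emph{on the set of boundary parameters} $\{\rho=r\}$ -- equivalently, that there exist $(s,\theta)\neq(t,\varphi)$ with $p(r,s,\theta)=p(r,t,\varphi)$. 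That stronger form is indispensable: the functional \eqref{dio} integrates only over $\operatorname{bd}T_r\times\operatorname{bd}T_r$, so Theorem \ref{divergence} detects nothing unless two $\rho=r$ points collide. Closing the gap requires one more continuity/connectedness step, for instance: the meridian circle $M(s_1)$ is connected, contains the point $A$ lying inside the tube piece swept by the disks $D_u$ for $u$ near $s_2$, and (after the perturbations you already allow) also contains points outside that piece; hence it must cross its lateral boundary $\bigcup_u M(u)$, producing a point of $M(s_1)\cap M(u)$ for some $u\neq s_1$, i.e.\ two preimages both with $\rho=r$. Note that this missing step is essentially the one-parameter intermediate-value mechanism of Theorem \ref{111}, which you had relegated to the treatment of degenerate cases; in fact it is the heart of the matter. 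A minor additional caveat: the reduction to $s_1\neq s_2$ should explicitly dismiss the spurious coincidences $p(0,s,\theta_1)=p(0,s,\theta_2)$ at the centerline, which satisfy Definition \ref{def:interp1} literally but carry no geometric content.
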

	\begin{figure}[hbt]
		\centering
		\includegraphics[width=0.5\textwidth]{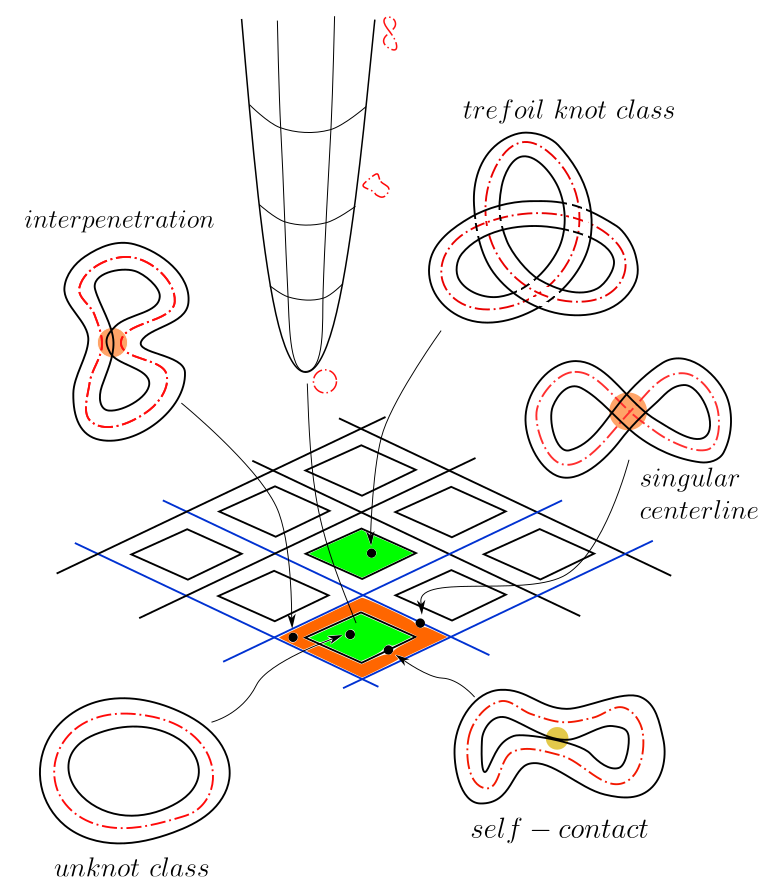}
		\caption{Representation of the desired subdivision: every tubular neighbourhood is identified with its centerline $\gamma$. The space of the possible knot types of $\gamma$ is partitioned into isotopy classes, separated by walls (blue lines) that represent configurations with singular knots as centerlines. Each class is divided into two regions: the green one, with physically admissible neighbourhoods and the orange one, with interpenetration of matter. The black squares represent the configurations with self-contact on the boundary and without interpenetration, so physically consistent. We show an example for each mentioned case in the class of the unknot, and an example of the trefoil knot. The paraboloid represents the ideal behaviour of the energy functional that we are looking for, analogous to the one proposed by O'Hara for knots, with a minimum in each class and the circular centerline as global minimizer.}
		\label{isotopy}
	\end{figure}
	These results allow us to use the same approach of O'Hara: given the radius $r$ of the neighbourhood, we partition the space of configurations of tubular neighbourhoods into cells
	defined by the knot type of the centerline; we want to separate these cells with walls of infinite energy, that consist of configurations in which the centerline is a singular knot; moreover, we would like to partition each cell into two regions: one region contains the centerlines that generate tubular neighbourhoods without interpenetration of matter, while the other region contains all the tubular neighbourhoods where interpenetration of matter occurs. Thanks to the previous results, the boundary between the two regions is then characterized by the presence of self-contact points on the boundary of the neighbourhood, but without interpenetration of matter. An intuitive representation of our ideal partition is presented in Figure \ref{isotopy}.\\
	We then need an energy functional that is infinite when there is self-contact or interpenetration of matter and is bounded otherwise. 
	\section{Geometrical setting of the problem}
	\begin{definition} Given the map $p$ defined by \eqref{eq:Pdsth}, we call the set $\mathcal{A}(s)=p(\cdot,s,\cdot)$ \emph{cross-section} at $s$, the curve $M(s)(\cdot)=p(r,s,\cdot)$ \emph{meridian} at $s$, and the curve $P(\theta)(\cdot)=p(r,\cdot,\theta)$ \emph{parallel} at angle $\theta$.
	\end{definition}
	By \eqref{eq:Pdsth}, then
	\begin{equation}
		\label{eq:defMerPar}
		\begin{aligned}
			M(s)(\theta)&=\gamma(s)+r \cos \theta \boldsymbol{n}(s)+r \sin \theta \boldsymbol{b}(s), \theta \in [0,2\pi]\\
			P(\theta)(s)&=\gamma(s)+r \cos \theta \boldsymbol{n}(s)+r\sin \theta \boldsymbol{b}(s), s\in[0,L]\
		\end{aligned}
	\end{equation}
	We remark that parallels and meridians are closed curves. In our case, having chosen a circular neighbourhood, the meridian is always a circle of radius $r$ with center at $\gamma(s)$ and lies in a plane perpendicular to $\boldsymbol{t}(s)$, $\forall s\in [0,L]$.
	Given two points $A(s,\theta)$ and $B(s,\varphi)$ belonging to the same meridian $M(s)$, we denote by $l_{M(s)}(A,B)$ their minimal distance along the meridian. 
	We have immediately
	\begin{equation}
		\label{meridian}
		\begin{aligned}
			l_{M(s)}(A,B):&=\begin{cases}
				r |\theta-\varphi|\qquad\qquad \text{if $|\theta-\varphi|\leq \pi$}\\
				r(||\theta-\varphi|-2\pi|)\quad \text{if $\pi<|\theta-\varphi|\leq 2\pi$}
			\end{cases}=\\
			&=r(2 h \pi+(-1)^h|\theta-\varphi|),\quad h=0,1,\,\,|\theta-\varphi|\in [h\pi,(h+1)\pi].
		\end{aligned}
	\end{equation}
	This is a positive and continuous function on $[0,2\pi]\times [0,2\pi]$. Moreover, $l_{M(s)}=l_{M(t)}$ for every $s,t \in[0,L]$.\\
	Similarly, given two points $A(s_1,\theta)$ and $B(s_2,\theta)$ belonging to the same parallel $P(\theta)$, we denote by $l_{P(\theta)}(A,B)$ their minimal distance along the parallel. Differentiating with respect to $s$ the expression of the parallel in \eqref{eq:defMerPar}, we get
	\begin{equation}
		\label{eq:lPth}
		\hat{l}_\theta(s_1,s_2):=\int_{s_1}^{s_2}\left\|\frac{\hfill d}{ds}P(\theta)(s)\right\|\,ds=\int_{s_1}^{s_2} \sqrt{(1-r \cos \theta \kappa(s))^2+r^2 \tau(s)^2}\, ds
	\end{equation} where $s$ does not cross $L$. Then, assuming $0\leq s\leq t\leq L$,
	\begin{equation}
		\label{parallel}
		\begin{aligned}
			l_{P(\theta)}(A,B)&=\begin{cases}
				\hat{l}_\theta(s,t) &\text{if $\hat{l}_\theta(s,t)\leq \Tilde{L}_\theta/2$}\\
				\hat{l}_\theta(t,L)+\hat{l}_\theta(0,s)=\Tilde{L}_\theta-\hat{l}_\theta(s,t)&\text{if $\Tilde{L}_\theta/2<\hat{l}_\theta(s,t)\leq \Tilde{L}_\theta $}
			\end{cases}=\\
			&=h \Tilde{L}_\theta+(-1)^h \hat{l}_\theta(s,t)\quad h=0,1,\,\,\hat{l}_\theta(s,t) \in\left[h \Tilde{L}_\theta,(h+1)\frac{\Tilde{L}_\theta}{2}\right].
		\end{aligned}
	\end{equation}
	This is a positive and continuous function of $s$ and $t$ on $[0,L]\times[0,L]$: indeed, it is the smallest length between the one obtained in clockwise direction and the counterclockwise one. From  \eqref{eq:lPth} we have that
	$$\left\|\frac{\hfill d}{ds}P(\theta)(s)\right\|^2= (1-r \cos \theta \kappa(s))^2+r^2 \tau(s)^2.$$
	While the first term is due to the distance along the parallel on $\operatorname{bd} T_r$  corresponding to the angle $\theta$ and projected along the vector $\boldsymbol{t}(s)$, we notice that the second term, which involves the torsion of $\gamma$, is orthogonal to the previous one and comes from the fact that the curve $\gamma$ leaves its osculating plane: indeed, for plane curves, this term vanishes. Anyway, although this distance is measured along meridians, it has no relationship with the distance $l_{M(s)}$, which has to be added in order to compute a sort of distance between generic points on $\operatorname{bd} T_r$.
	In view of this, we introduce the two finite lengths
	\begin{equation}
		\label{parallelt}
		l^{\boldsymbol{t}}_{P(\theta)}=\int_s^t ||P'(\xi)\cdot \boldsymbol{t}(\xi)||\,d\xi=\int_s^t |1-r \cos \theta \kappa(\xi)|\,d\xi=\int_s^t |1-r \cos \theta (\boldsymbol{n}(\xi)\cdot \boldsymbol{t}'(\xi))|\,d\xi
	\end{equation}
	i.e. the increment of the length of the parallel between the cross-sections $\mathcal{A}(s)$ and $\mathcal{A}(t)$ projected on the vector $\boldsymbol{t}(\xi)$, $\xi\in [s,t]$, and 
	\begin{equation}
		\label{parallelperp}
		\begin{aligned}l^{\boldsymbol{n},\boldsymbol{b}}_{P(\theta)}&=\int_s^t ||P'(\xi)\times \boldsymbol{t}(\xi)||\,d\xi=\int_s^t || -\tau(\xi)r\sin\theta \boldsymbol{n}(\xi)+\tau(\xi)r \cos\theta \boldsymbol{b}(s)||\,d\xi=\\
			&=r\int_s^t|\tau(\xi)|\, d\xi=r\int_s^t|\boldsymbol{n}(\xi)\cdot \boldsymbol{b}'(\xi)|\, d\xi
		\end{aligned}
	\end{equation}
	i.e. the increment of the length of the parallel projected on the boundary of $\mathcal{A}(\xi)$.\\ Although $l_{P(\theta)}^2\neq (l^{\boldsymbol{t}}_{P(\theta)})^2+(l^{\boldsymbol{n},\boldsymbol{b}}_{P(\theta)})^2$, this relationship is true ``infinitesimally'' and will help us to define the main functional  (see Figure \ref{elica} for the simple case of the helix).\\
	\begin{figure}[htb]
		\centering
		\includegraphics[width=0.583\textwidth]{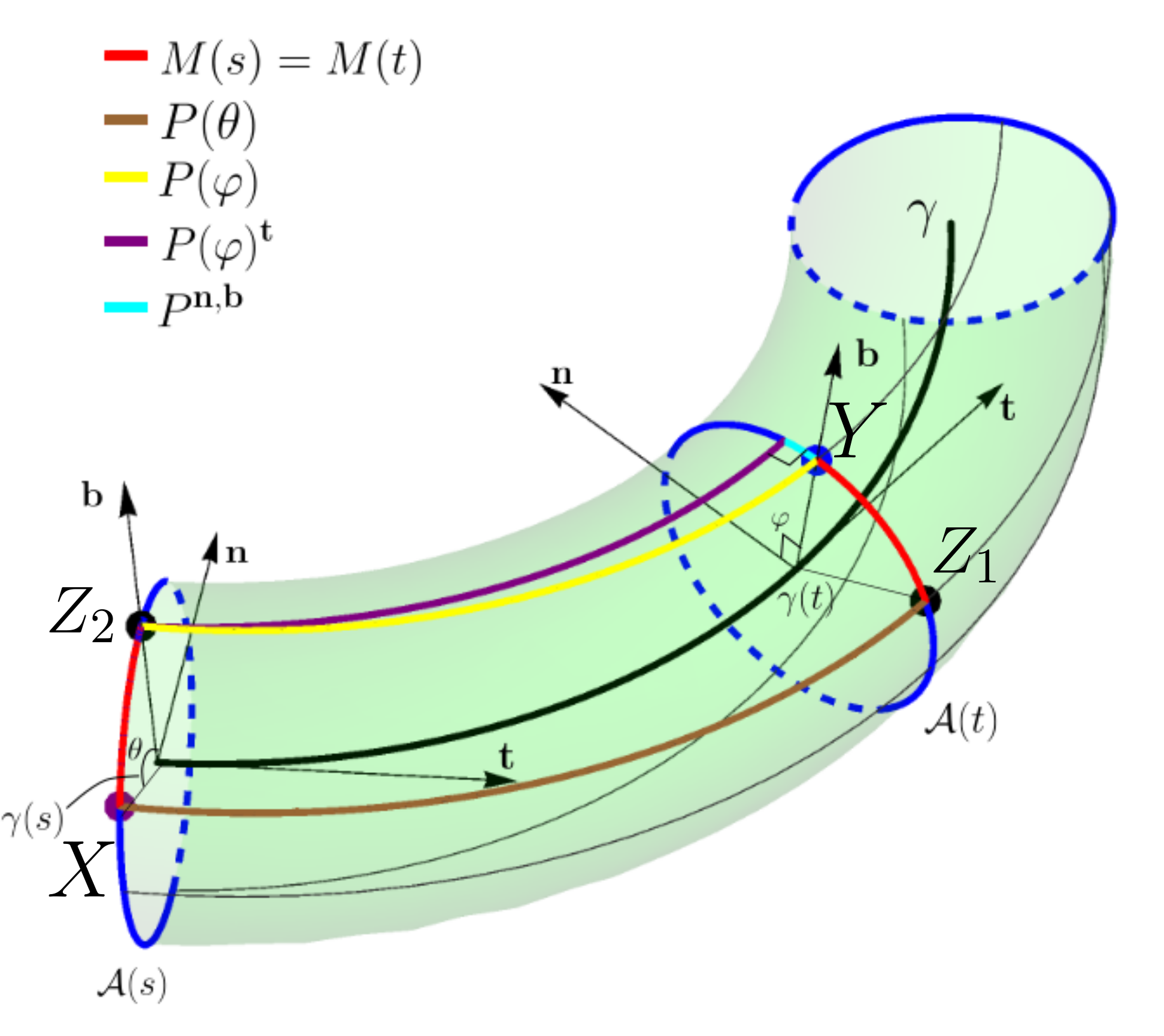}
		\caption{As an example, we plot an arc of helix described as the curve $\gamma$, with $\gamma(\xi)=(R\cos(\xi/\lambda),R\sin(\xi/\lambda),a\xi/\lambda)$, $\xi \in [0,7/2]$, $R=2,a=0.5,\lambda=\sqrt{R^2+a^2}$. We show the tubular neighbourhood $T_r$ around $\gamma$ with radius $r=0.5$, and we highlight the cross-sections corresponding to $s$ and $t$ and the Serret-Frenet frame in $\gamma(s)$ and $\gamma(t)$, with $s=0$ and $t=2$. Chosen the points $X(0,5\pi/6)$ and $Y(2,\pi/2)$, we have $Z_1(2,5\pi/6), Z_2(0,\pi/2)$ in the figure. We have the constant curvature $\kappa=R/\lambda^2\sim 0.47$ and the torsion $\tau=a/\lambda^2\sim 0.118$. In red, the two arcs of meridians and in yellow and brown the two parallels $P(\theta)$ and $P(\varphi)$. Finally, we show the decomposition of $P(\varphi)$ in a light blue arc $P^{\boldsymbol{n},\boldsymbol{b}}$ on the boundary of $\mathcal{A}(t)$ that has length $\hat{l}_P^{\boldsymbol{n},\boldsymbol{b}}=r \tau (t-s)$ and a purple curve $P(\varphi)^\mathbf{t}$ orthogonal to the cross-section $\mathcal{A}(t)$ with length $|t-s-r (t-s) \cos \varphi \kappa|$ that in our case equals $t-s$ because $\varphi=\pi/2$.}
		\label{elica}
	\end{figure}		
	\section{Definition of the functional}
	We recall the M\"obius energy functional introduced by O'Hara in \cite{o1991energy} for a knot $K$ described by a $C^2$ curve $\gamma$ with fixed length $L$, where $X$ and $Y$ are two points on the knot, whose position is described by their curvilinear abscissas in $ [0,L]$. The functional is 
	\begin{equation}
		\label{hara}
		\mathcal{E}(K)=\int_K\int_K \dfrac{1}{|X-Y|^2}-\dfrac{1}{{d}^2(X,Y)}dX\,dY=\int_0^L\int_0^L \dfrac{1}{|\gamma(s)-\gamma(t)|^2}-\dfrac{1}{(s-t)^2}ds\,dt
	\end{equation}
	and the integration is performed on the cartesian product between the knot and itself, where $d(X,Y)$ is simply the distance along the knot between the two points, i.e. the difference between their abscissas.\\
	We want to define a similar repulsive energy on a tubular neighbourhood in $\mathbb{E}^3$. We remark, however, that the exponent chosen for the distance between two charges in the electrostatic potential is $-2$, instead of the $-1$ of the Coulomb law. We first choose this exponent following O'Hara approach, but a more detailed discussion about it will be carried out below.\\
	As was done for the M\"obius energy, the main ingredient to distinguish different configurations is to make a comparison between the Euclidean distance between two points $X$ and $Y$ and an expression $d(X,Y)$ that takes into account the geometry of the configuration. In this way, when two generic points occupy different positions in $\mathbb{E}^3$ the functional is well-defined. Otherwise, the interesting cases are the following:
	\begin{itemize}
		\item[a)] $X=\gamma(s)=\gamma(t)=Y$ with $s=t$, i.e. $X,Y$ occupy the same position in space and they are determined by the same abscissa: then, as $X\to Y$, $|X-Y|$ and $d(X,Y)$ go to zero and the difference between their squared reciprocals may be finite or infinite (of course, one should have a bounded limit for the integral and this actually happens for the above distance $|s-t|$);
		\item[b)] $X=\gamma(s)=\gamma(t)=Y$ with $s\neq t$, i.e. $X,Y$ occupy the same position in space but come from different points on the abstract knot: then, the second fraction of the integrand is bounded, while the first term diverges to $+\infty$ and so does the energy: it is the case of a singular knot.
	\end{itemize}
	Minimizing the integral, b) is avoided, provided the integral is convergent in case a).
	
	We now want to generalize the previous functional to tubular neighbourhoods distinguishing isotopy classes and tubes with or without interpenetration of matter. Therefore, it is natural to use as a fact of distinction between these classes the presence of self-contact between points on the boundary of the tube.\\ This choice is also motivated by the fact that,  for the electrostatic inspiration of this energy,
	when we imagine our elastic material as a conductor charged in electrostatic equilibrium, the charge distributes on the external surface of the conductor and being the charge homogeneous, the object presents a repulsive character with itself.\\
	
	\noindent Given two points $X(s,\theta)$ and $Y(t,\varphi)$ on $\operatorname{bd} T_r$, we denote 
	\begin{equation}
		\label{eq:z1z2}
		Z_1(t,\theta)=P(\theta)\cap M(t),\quad Z_2(s,\varphi)=P(\varphi)\cap M(s) 
	\end{equation}
	i.e. the point belonging to the same meridian of $Y$ and the same parallel of $X$ and the point belonging to the same meridian of $X$ and the same parallel of $Y$, respectively. Next, we set $\hat{l}_{P(\theta)}$ the length of the parallel $P(\theta)$ between $X$ and $Z_1$ as in \eqref{parallel}, that will be decomposed into two components as before and analogously with the length $\hat{l}_{P(\varphi)}$ of the parallel $P(\varphi)$ between $Y$ and $Z_2$. Finally, by $l_{M(s)}$ we denote the length of the meridians corresponding to $s$ and $t$ between $X$ and $Z_2$ and $Y$ and $Z_1$ as in \eqref{meridian}.
	\begin{definition}
		Given a curve $\gamma$ of class $C^3$ and a tubular neighbourhood $T_r[\gamma]$ around $\gamma$ with radius $r>0$, we define
		\begin{equation}
			\label{dio}
			\begin{aligned}
				&F(T_r)=\int\limits_{\operatorname{bd}T_r \times \operatorname{bd} T_r}\left(\dfrac{1}{|X-Y|^2}-\dfrac{1}{{d^*}^2(X,Y)}\right)dS\,dS=\\ &=\int_0^L\int_0^L\int_0^{2\pi}\int_0^{2\pi}\left(\dfrac{1}{|p(r,s,\theta)-p(r,t,\varphi)|^2}-\dfrac{1}{{d^*}^2(p(r,s,\theta),p(r,t,\varphi))}\right)\,d\varphi\,d\theta\,dt\,ds
			\end{aligned}    
		\end{equation}
		where $|X-Y|$ is the Euclidean distance
		\begin{equation}
			\label{euclidean}
			|X-Y|(s,\theta,t,\varphi)=|\gamma(s)-\gamma(t)+ r(\cos \theta \boldsymbol{n}(s)- \cos \varphi \boldsymbol{n}(t))+r (\sin \theta \boldsymbol{b}(s)-\sin \varphi \boldsymbol{b}(t))|
		\end{equation}
		and ${d^*}^2(X,Y)$ is, assuming $s<t$ for simplicity, 
		\begin{equation}
			\label{dstar}
			{d^*}^2(X,Y)(s,\theta,t,\varphi)=(l_{M(s)}+\hat{l}^{\boldsymbol{n},\boldsymbol{b}}_{P})^2+\hat{l}^{\boldsymbol{t}}_{P(\theta)}\hat{l}^{\boldsymbol{t}}_{P(\varphi)}
		\end{equation}
		where the quantities are defined in \eqref{meridian}, \eqref{parallelt}, \eqref{parallelperp}.
	\end{definition}
	In words, we sum the projection of the parallel on the cross-section with the arc of meridian, so the first part is an arc on the boundary of the cross-section, while the second part is made up of two curves orthogonal to $\mathcal{A}(\xi)$, $\forall \xi \in[s,t]$.
	\begin{remark}
		\label{planar}
		\emph{If the midline $\gamma$ lies in a plane, then ${d^*}^2(X,Y)$ reduces to $\hat{l}_{P(\theta)}\hat{l}_{P(\varphi)}+l_{M(s)}l_{M(t)}=\hat{l}_{P(\theta)}\hat{l}_{P(\varphi)}+l_{M(s)}^2$, where $\hat{l}_P$ is defined in \eqref{parallel}, since the torsion vanishes, and the parallels are plane curves.}
	\end{remark}
	\begin{remark}
		\label{notadistance}
		\emph{${d^*}^2(X,Y)$ is a positive quantity which vanishes if and only if $X\equiv Y$ and is a symmetric function of $X$ and $Y$. However, $d^*$ is not a distance on $\operatorname{bd}T_r$ because it does not satisfy the triangle inequality.
		As a counterexample, one may take $X=(R-r,0,0)$, $Y=(0,R,r)$, $Z=\left(\left(R-r \cos \frac{\pi}{6}\right)\cos \frac{\pi}{3},\left(R-r \cos \frac{\pi}{6}\right)\sin \frac{\pi}{3},r\sin\frac{\pi}{6}\right)$, that lie on the boundary of a torus, and verify directly that
		$${d^*}^2(X,Y)\! \geq \! {d^*}^2(X,Z)+{d^*}^2(Z,Y)+2\sqrt{{d^*}^2(X,Z)}\sqrt{{d^*}^2(Z,Y)}\!=\!(d^*(X,Z)+d^*(Z,Y))^2$$}
	\end{remark}
	\begin{remark}                                         	\emph{An intuitive expression for playing the role of $d(X,Y)$ in our model would be the shortest distance on $\operatorname{bd} T_r$ between $X$ and $Y$, i.e. the length of the shortest geodesic on the surface between the two points. However, the difficulty in its explicit calculation is prohibitive for specific results on a generic configuration.}
	\end{remark}
	\section{Main results}
	We now prove the main results on our functional. Similarly to the knot energy \eqref{hara}, we can prove that it is bounded on a torus with circular centerline (instead that on a circular knot), that diverges only when two points on the boundary are different on the object (have different coordinates) but coincide in the space, that is bounded from below and continuous in the topology of the space of functions to which $\gamma$ belongs. The analogous results were proved in \cite{o1991energy} for M\"obius energy. However, we cannot prove as was done for \eqref{hara} in \cite{freedman1994mobius} that the functional is positive for every $T_r$, while this statement would be easily true using as $d^*$ the geodesic distance on $\operatorname{bd} T_r$.
	Indeed, the integrand of our functional can not have in general a definite sign, but this is ininfluent from the point of view of the Calculus of Variations, for instance.
	\begin{remark}
		\emph{The functional $F(T_r)$ is invariant under a reparametrization of $\gamma$. Indeed, it is sufficient to multiply the integrand of the functional by $|\gamma'(s)||\gamma'(t)|$, where in general it could be $|\gamma'|\neq 1$.}
	\end{remark}
	\begin{remark}
		\emph{Using minimal lengths, the integrand of the functional is continuous on the set of couples $(\theta,\varphi)$, such that $|\theta-\varphi|=\pi$ and on the set of couples $(s,t)$, such that $\int_s^t \sqrt{(1-r \cos \theta \kappa(u))^2+r^2 \tau(u)^2}\, du= \Tilde{L}_\theta/2$. We remark that on these set of points, it is not differentiable.}
	\end{remark}
	\begin{theorem}
		\label{thtoro}
		The functional $F(T_r)$ is positive and bounded on a tubular neighbourhood of a  circular centerline, i.e. a torus, that does not present interpenetration of matter in its interior or self-contact on its boundary.
	\end{theorem}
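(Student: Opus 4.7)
The plan is to exploit the fact that a circular centerline forces $\kappa\equiv 1/R$ and $\tau\equiv 0$, and that absence of interpenetration is equivalent to $r\kappa<1$, i.e.\ $r<R$. Parametrizing $\gamma(s)=(R\cos(s/R),R\sin(s/R),0)$ with $s\in[0,2\pi R]$ and setting $A(\theta):=R-r\cos\theta\in[R-r,R+r]$, a direct computation from \eqref{eq:Pdsth} yields
\begin{equation*}
    |X-Y|^2 \;=\; 4r^2\sin^2\!\tfrac{\theta-\varphi}{2} \;+\; 4A(\theta)A(\varphi)\sin^2\!\tfrac{s-t}{2R},
\end{equation*}
and, using Remark~\ref{planar} together with the vanishing of the normal/binormal component \eqref{parallelperp} when $\tau=0$,
\begin{equation*}
    {d^*}^2(X,Y) \;=\; r^2\eta^2 \;+\; \tfrac{A(\theta)A(\varphi)}{R^2}\sigma^2,
\end{equation*}
where $\eta:=\min(|\theta-\varphi|,2\pi-|\theta-\varphi|)$ and $\sigma:=\min(|s-t|,2\pi R-|s-t|)$ are the minimal arc parameters on meridian and parallel, as in \eqref{meridian}, \eqref{parallel}.

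Positivity then follows at once from $|\sin(x/2)|\leq |x|/2$. Indeed $|\sin((\theta-\varphi)/2)|=\sin(\eta/2)$ and $|\sin((s-t)/(2R))|=\sin(\sigma/(2R))$, so applying the inequality to each term gives $|X-Y|^2\leq {d^*}^2$ pointwise on $\operatorname{bd}T_r\times\operatorname{bd}T_r$. Hence the integrand of $F(T_r)$ is non-negative, and it is strictly positive wherever $\eta>0$ or $\sigma>0$, yielding $F(T_r)>0$.

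For boundedness I first note that $|X-Y|=0$ forces $X=Y$, and since $r<R$ the torus is embedded so this can only occur on the ``diagonal'' $(s,\theta)=(t,\varphi)$ (modulo the periods), which is exactly where ${d^*}^2=0$. On the complement of any neighbourhood of this diagonal the integrand is continuous on a compact set, hence bounded. Near the diagonal I would plug the Taylor expansion $4\sin^2(x/2)=x^2-x^4/12+O(x^6)$ into the two formulas above to get
\begin{equation*}
    {d^*}^2 - |X-Y|^2 \;=\; \tfrac{1}{12}\!\left(r^2\eta^4 + \tfrac{A(\theta)A(\varphi)}{R^4}\sigma^4\right) + O(\eta^6+\sigma^6),
\end{equation*}
and then estimate
\begin{equation*}
    \frac{1}{|X-Y|^2}-\frac{1}{{d^*}^2} \;=\; \frac{{d^*}^2-|X-Y|^2}{|X-Y|^2\,{d^*}^2} \;\leq\; \frac{2\bigl({d^*}^2-|X-Y|^2\bigr)}{{d^*}^4}
\end{equation*}
valid for $\eta,\sigma$ small, since then $|X-Y|^2\geq {d^*}^2/2$. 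Setting $u:=r\eta$ and $v:=\sqrt{A(\theta)A(\varphi)}\,\sigma/R$, the numerator is bounded by $\max(1/r^2,1/(R-r)^2)(u^4+v^4)$ and the denominator equals $(u^2+v^2)^2$; the elementary inequality $u^4+v^4\leq (u^2+v^2)^2$ then gives a uniform upper bound for the integrand in a neighbourhood of the diagonal. Combined with the compactness estimate away from the diagonal, the integrand is bounded on all of $\operatorname{bd}T_r\times\operatorname{bd}T_r$, and the fourfold integral over the compact parameter domain is finite.

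The main technical point to watch is the piecewise nature of \eqref{meridian} and \eqref{parallel} at $|\theta-\varphi|=\pi$ and $|s-t|=\pi R$; these matchings are continuous on the measure-zero transition sets and the Taylor analysis goes through uniformly precisely because $A(\theta)A(\varphi)$ is pinched between $(R-r)^2$ and $(R+r)^2$, so that no factor in the estimates blows up or degenerates.
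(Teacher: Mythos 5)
Your proof is correct, and it differs from the paper's in instructive ways. For positivity, the paper argues geometrically: it shows that $X$, $Y$, $Z_1$, $Z_2$ are coplanar and form an isosceles trapezoid, applies Pythagoras to get $|X-Y|^2=(XZ_2)^2+XZ_1\cdot YZ_2$, and then uses ``chord $\leq$ arc''. Your identity $|X-Y|^2=4r^2\sin^2\frac{\theta-\varphi}{2}+4A(\theta)A(\varphi)\sin^2\frac{s-t}{2R}$ is exactly that trapezoid decomposition written analytically (meridian chord squared plus the product of the two parallel chords), and your use of $|\sin t|\leq|t|$ is the chord-versus-arc inequality in closed form; so the two arguments are substantively the same, with yours avoiding the planarity/determinant computation at the cost of a trigonometric identity. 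For boundedness from above the routes genuinely diverge: the paper integrates explicitly in $x=v-u$, obtaining $\pi/\sqrt{((a-b)^2+c^2)((a+b)^2+c^2)}$ and an $\arctan$ term, and then checks that the limit \eqref{integrando} of the partially integrated difference is finite as $\varphi\to\theta$; you instead show that the \emph{integrand itself} is uniformly bounded near the diagonal, via the fourth-order Taylor expansion of $4\sin^2(x/2)$, the lower bound $|X-Y|^2\geq {d^*}^2/2$, and the elementary inequality $u^4+v^4\leq(u^2+v^2)^2$. Your version is more elementary (no special-function integrals), proves the slightly stronger statement that the integrand is bounded on all of $\operatorname{bd}T_r\times\operatorname{bd}T_r$, and is structurally the same cancellation-at-fourth-order argument the paper deploys later in the general Theorem~\ref{order}; the paper's version, in exchange, produces the explicit one-variable profile of the integrand, which is useful for the subsequent discussion of exponents. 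Your handling of the piecewise minimal-arc definitions \eqref{meridian}, \eqref{parallel} and of the uniformity in $\theta,\varphi$ via $(R-r)^2\leq A(\theta)A(\varphi)\leq(R+r)^2$ closes the only delicate points, so no gap remains.
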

	\begin{proof}
		Let $R$ be the radius of the centerline of the torus in the plane $z=0$ and $r$ the radius of its generating circle; we assume $R>r$ to avoid interpenetration of matter or self-contact and we reparametrize the midline so that the new parameters $u,v$ vary from $0$ to $2\pi$. Clearly, it is sufficient to study the integrand of $F(T_r)$ with this parametrization. We start proving that the functional is bounded from above.\\
		We consider two generic points $X,Y$ on $\operatorname{bd}T_r$ with latitudes $\theta,\varphi$ and longitudes $u,v$. Setting 
		$$a=R-r\cos\theta,\ b=R-r\cos\varphi,$$
		we have $a,b>0$ since $R>r$ and then
		$$X=(a\cos u,a\sin u, r \sin \theta)\qquad Y=(b\cos v,b\sin v, r \sin \varphi).$$
		Moreover, setting $c=r(\sin \theta-\sin \varphi)$, the euclidean distance between $X$ and $Y$ is easily found as
		$$|X-Y|^2=a^2+b^2+c^2-2ab\cos(v-u).$$
		From \eqref{dstar}, ${d^*}^2(X,Y)$ is given by $r^2|\theta-\varphi|^2+ab|v-u|^2$, if $ |\theta-\varphi|\leq \pi$ and $|v-u|\leq \pi$ and 
		$$
		r^2(2 h \pi+(-1)^h |\theta-\varphi|)^2+a b(2 k \pi+(-1)^k |v-u|)^2$$
		in the other cases, for $h,k=0,1, |\theta-\varphi|\in [h\pi,(h+1)\pi], |v-u|\in [k\pi,(k+1)\pi]$.\\
		We perform the integration in the first case; the others are similar. 
		Set $x=v-u$ and suppose $v>u$ by symmetry, so that
		$$
		\begin{aligned}
			&F(T_r)=\int_{[0,2\pi]^4}\left(\frac{1}{|X-Y|^2}-\frac{1}{{d^*}^2(X,Y)}\right)\,dv\,du\,d\varphi\,d\theta=\\
			&=2\int_0^{2\pi}\int_0^{2\pi}\int_0^{2\pi}\int_0^\pi\left(\frac{1}{|X-Y|^2}-\frac{1}{{d^*}^2(X,Y)}\right)\,dx\,du\,d\varphi\,d\theta=\\
			&=4\pi\int_0^{2\pi}\int_0^{2\pi}\int_0^\pi\left(\frac{1}{|X-Y|^2}-\frac{1}{{d^*}^2(X,Y)}\right)\,dx\,d\varphi\,d\theta.
		\end{aligned}
		$$	
		
		Now we carry out the first integration and show that the function obtained is bounded from above in the set $[0,2\pi]\times [0,2\pi]$, with $ |\theta-\varphi|\leq \pi$.\\
		First, we have
		$$\int_0^\pi\dfrac{1}{|X-Y|^2}\,dx=\int_0^\pi \dfrac{1}{a^2+b^2+c^2-2ab\cos x}\,dx=\frac{\pi}{\sqrt{((a-b)^2+c^2)((a+b)^2+c^2)}}$$
		Since $a,b>0$, this integral is unbounded if and only if $a=b$ and $c=0$, i.e. $\theta=\varphi$. 
		On the other hand, we have 
		$$
		\int_0^\pi \frac{1}{{d^*}^2(X,Y)}\,dx=\int_0^\pi \frac{1}{r^2|\theta-\varphi|^2+abx^2}\,dx=\frac{1}{r|\theta-\varphi|\sqrt{ab}}\arctan\left(\frac{\sqrt{ab}\pi}{r|\theta-\varphi|}\right).
		$$
		So, for $\theta\neq \varphi$ both integrals are bounded from above and so is the functional because in the next integrations $\theta,\varphi$ are bounded.  Then we analyze the limit of the integrand when $\varphi\to\theta$. For $\varphi \to \theta$ we have $b\to a>0$, and using the expansion
		$$
		\arctan\left(\frac{\sqrt{ab}\pi}{r|\theta-\varphi|}\right)\sim \frac{\pi}{2}-\frac{r|\theta-\varphi|}{\sqrt{ab}\pi}
		$$
		the limit becomes, naming now $y=\theta-\varphi$, and after some calculations,
		\begin{multline}
			\label{integrando}
			\lim_{y\to 0}\!\left(\!\frac{\pi}{4r\left|\sin\left(\dfrac{y}{2}\right)\right|\sqrt{\! \left(\! R-r \cos\left(\! \theta+\dfrac{y}{2}\right)\left(\! \cos\left(\dfrac{y}{2}\right)\! \right)\! \right)^2 \! +2r^2\left(\! \cos\left(\theta+\dfrac{y}{2}\right)\sin\left(\dfrac{y}{2}\right)\! \right)^2}}\right.\\
			-\left.\frac{\pi}{2r|y|\sqrt{ab}}+\frac{1}{\pi ab}\right)
			=\frac{1}{\pi a^2}<+\infty.
		\end{multline}
		We notice that the part containing $y$ is infinitesimal since 
		$$\lim_{y\to0}\left(\frac{1}{\sin y}-\frac{1}{y}\right)\sim\frac{y}{6}$$
		and this shows that the squared Euclidean distance and ${d^*}^2$ degenerate at the same order when $X\to Y$, so the two divergences compensate.\\
		\begin{figure}[htb]
			\centering
			\subfloat[]{\label{trapeziotoro}\includegraphics[width=0.48\textwidth]{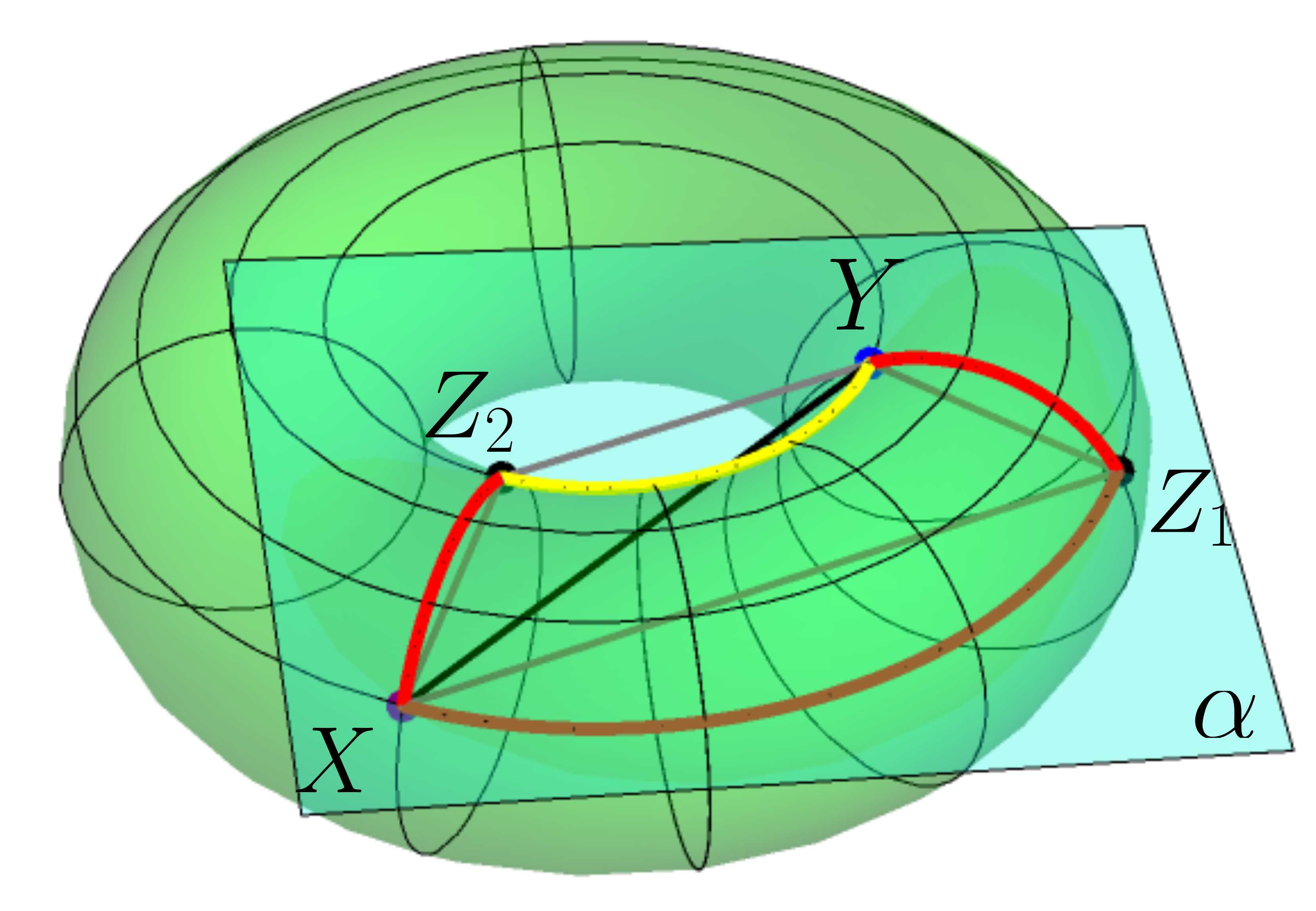}}
			\subfloat[]{\label{trapeziodasoloo}\includegraphics[width=0.48\textwidth]{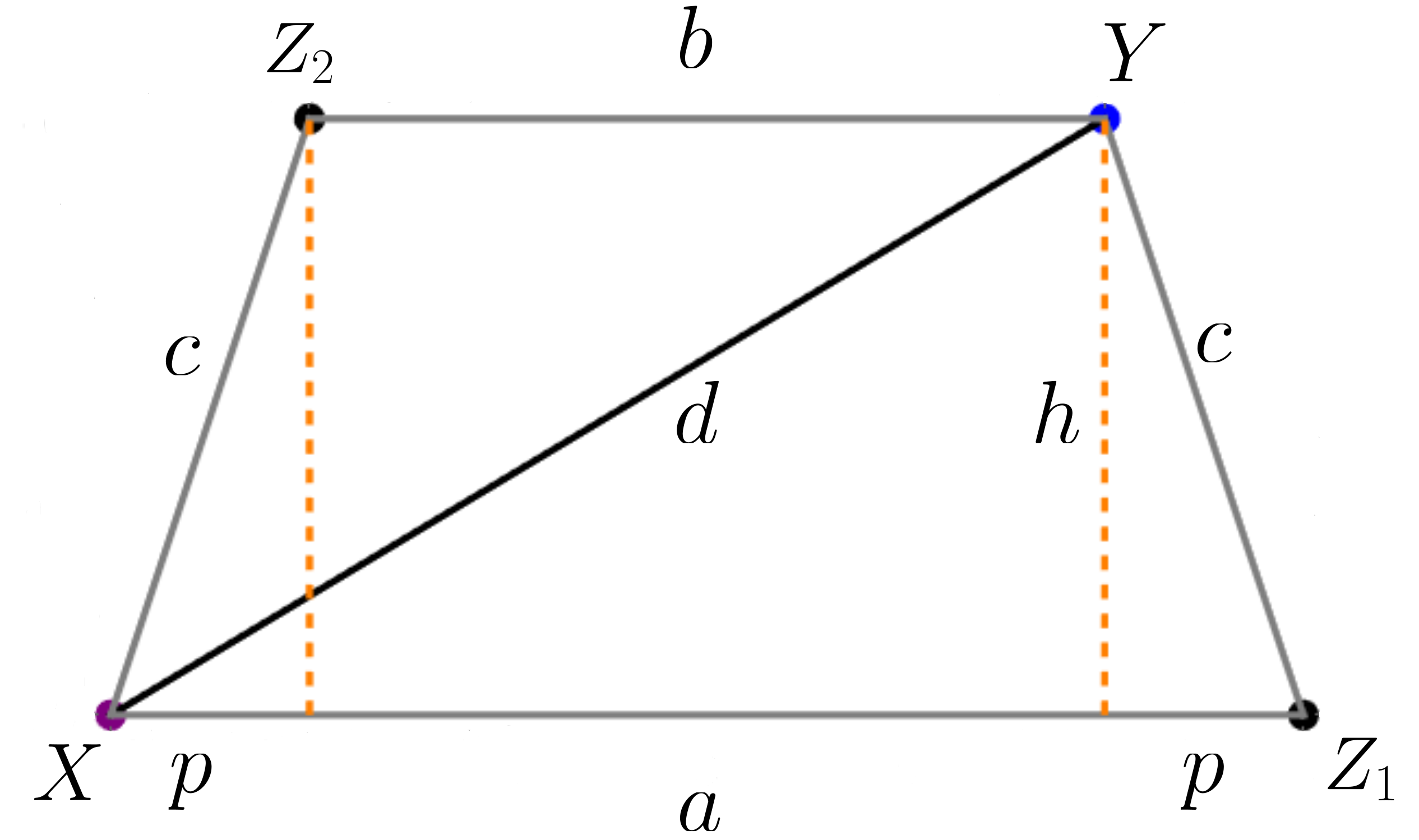}}			
			\caption{ 
				(a) The torus with the four points $X,Y,Z_1,Z_2$ with the parameters $R=2,r=1,u=\frac{11\pi}{6},v=\frac{\pi}{3},\theta=\frac{\pi}{3},\varphi=\frac{5\pi}{6}$. In red the two arcs of meridians considered in ${d^*}^2$, in yellow the parallel at angle $\varphi$ and in brown the parallel at angle $\theta$. The four points belong to the same plane $\alpha: 1.43x+0.38y+3.92z-4.96=0$ plotted in light blue. (b) The construction of the isosceles trapezoid in the plane $\alpha$; we use Pythagoras' theorem to find an equality involving the two basis $a$ and $b$, the edges $c$ and the diagonal $d$.}
			\label{trapezi}
		\end{figure}\\
		\noindent  We prove that the functional is bounded from below.
		Taken any two points $X$ and $Y$ on $\operatorname{bd} T_r$ and taken $Z_1$ and $Z_2$ as in \eqref{eq:z1z2}, we know that all the parallels on a torus are planar circular arcs, that lie in plane parallels to the plane $z=0$ of the centerline, while the meridians are all congruent circles. As shown in Figure \ref{trapeziodasoloo}, we consider the segments $XZ_1$ with length $a$, $YZ_2$ with length $b$, that have in general different lengths and the congruent segments $XZ_2$  and $YZ_1$ with length $c$, while we denote with $\overline{XY}=d$ the length of the segment that connects $X$ and $Y$.\\
		As shown in Figure \ref{trapeziotoro}, the four points all lie on a plane $\alpha$, and thus they define a planar polygon with four edges. Indeed, simple calculations show immediately that 
		$$
		\det \begin{pmatrix}
			XZ_2\\
			XZ_1\\
			XY
		\end{pmatrix}= \det\begin{vmatrix}
			r \cos{u}(\cos{\varphi}-\cos{\theta}) & r \sin{u}(\cos{\varphi}-\cos{\theta}) & c\\
			a(\cos{u}-\cos{v}) & a(\sin{u}-\sin{v}) & 0\\
			a\cos{u}-b\cos{v} & a\sin{u}-b\sin{v} &c
		\end{vmatrix}=0
		$$
		i.e. the four points lie on the same plane.\\
		Moreover, $YZ_2=(b(\cos{u}-\cos{v}), b(\sin{u}-\sin{v}),0)$, that is clearly parallel to $XZ_1$ and this implies that the quadrilateral defined in the plane $\alpha$ is an isosceles trapezoid.
		Now, from a standard reasoning, using Pythagoras' Theorem and calling $p=(a-b)/2$ and $h$ the height, we have $$
		|X-Y|^2=d^2=h^2+(a-p)^2=c^2-p^2+a^2+p^2-2ap=c^2+ab=(XZ_2)^2+XZ_1\cdot YZ_2
		$$
		Now, it is obvious that the segment $XZ_2$ is shorter than the shortest arc of meridian connecting $X$ and $Z_2$ (and the same for $Y$ and $Z_1$) and that $XZ_1$ and $YZ_2$ are shorter than the minimal arc of parallel connecting the points (red, yellow and brown arcs in Figure \ref{trapeziotoro}), so it is true that $$|X-Y|^2\leq l_M^2+\hat{l}_{P(\theta)}\hat{l}_{P(\varphi)}$$
		and thanks to Remark \ref{planar} the second term is exactly ${d^*}^2(X,Y)$.\\
		Then ${d^*}^2(X,Y)\geq |X-Y|^2$, and $$\frac{1}{|X-Y|^2}\geq \frac{1}{{d^*}^2(X,Y)}$$
		and the integrand is greater or equal to $0$ and is integrated on bounded intervals; this implies that the functional is positive and thus bounded from below.
	\end{proof}
	\begin{remark}
		\label{annullamento}
		\emph{We have that ${d^*}^2(X,Y)=0$ holds iff $s=t$ and $\theta=\varphi$, i.e. $X=Y$.\\
		While one implication is trivial, if ${d^*}^2(X,Y)$ vanishes, then, being the sum of positive quantities, both must vanish, so $l_M=0$ implies $\theta=\varphi$ and
		$$\int_s^tr|\tau(\xi)|\,d\xi=0$$
		which implies either $t=s$, i.e. $X=Y$, or $\tau(\xi)=0$ for every $\xi$, but then
		the arc of centerline is a plane curve. Now from $\hat l_{P(\theta)}\hat l_{P(\varphi)}=0$ and $s\neq t$ it follows $1-r \cos \theta \kappa(\xi)=0$ for all $\xi$ as above, or $1-r \cos \varphi \kappa(\xi)=0$. But then in both cases $\kappa$ is constant and therefore the above arc is a circumference with radius $\rho=r\cos\theta$ and $\kappa\rho=1$, which implies local interpenetration. Therefore it must be $s=t$ and thus $X=Y$.}
	\end{remark}
	We now prove a boundedness result for a general regular closed midline.
	\begin{theorem}
		\label{order}
		The functional is bounded from below, i.e. $F(T_r)>-\infty$, for any $T_r[\gamma]$ with $\gamma\in C^3([0,L],\mathbb{R}^3)$.
	\end{theorem}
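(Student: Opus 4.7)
The plan is to show that the negative part of the integrand of $F(T_r)$ is dominated by an integrable function on the compact domain $[0,L]^2\times[0,2\pi]^2$. Splitting the domain into a neighbourhood $\mathcal{U}$ of the ``contact diagonal'' $D=\{(s,\theta,t,\varphi):s=t,\ \theta=\varphi\}$, on which $X=Y$, and its complement $\mathcal{U}^c$, I treat the two regions separately.

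On $\mathcal{U}^c$, Remark~\ref{annullamento} (together with the explicit exclusion of the degenerate loci where $\tau\equiv 0$ and $r\cos\theta\,\kappa\equiv 1$ on a positive-length arc, which the Serret--Frenet setting already rules out) yields $d^*(X,Y)>0$ pointwise. By continuity of $d^*$ and compactness of $\mathcal{U}^c$, one obtains a constant $c>0$ with $d^*\geq c$ on $\mathcal{U}^c$; hence the negative part $-1/{d^*}^2\geq -1/c^2$ is bounded, and so integrable on the finite-measure set $\mathcal{U}^c$. Self-contact pairs, where $|X-Y|=0$ occurs outside $D$, contribute only $+\infty$ to the integrand and cannot violate a lower bound.

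On $\mathcal{U}$, I write $s=t+u$, $\theta=\varphi+\beta$ and Taylor-expand using the $C^3$ regularity of $\gamma$ together with the Serret--Frenet equations $\boldsymbol{n}'=-\kappa\boldsymbol{t}+\tau\boldsymbol{b}$ and $\boldsymbol{b}'=-\tau\boldsymbol{n}$. This gives
\begin{equation*}
|X-Y|^2=Q(u,\beta;t,\varphi)+R_1(u,\beta;t,\varphi),
\end{equation*}
where $Q$ is a positive-definite quadratic form in $(u,\beta)$ whose coefficients depend on $\kappa(t)$, $\tau(t)$ and $\varphi$, and $R_1=O(|(u,\beta)|^3)$. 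A parallel computation, using the explicit formulas \eqref{meridian}, \eqref{parallelt}, \eqref{parallelperp} and the continuity of $\kappa,\tau$, shows that ${d^*}^2(X,Y)$ admits the same leading quadratic form,
\begin{equation*}
{d^*}^2(X,Y)=Q(u,\beta;t,\varphi)+R_2(u,\beta;t,\varphi),\qquad R_2=O(|(u,\beta)|^3).
\end{equation*}
Writing
\begin{equation*}
\frac{1}{|X-Y|^2}-\frac{1}{{d^*}^2}=\frac{R_2-R_1}{|X-Y|^2\cdot{d^*}^2},
\end{equation*}
the numerator is of size $O(|(u,\beta)|^3)$ while the denominator is of order $|(u,\beta)|^4$, so the integrand is dominated by $C\,|(u,\beta)|^{-1}$ near $D$. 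Since the transverse direction to $D$ in the domain is two-dimensional, such a bound is integrable on $\mathcal{U}$.

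Putting the two estimates together yields $F(T_r)>-\infty$. The main obstacle is precisely the matching of quadratic terms in the Taylor expansions of $|X-Y|^2$ and ${d^*}^2$: both expansions must be performed simultaneously and identified term by term, taking into account the decomposition of the parallel into tangential and normal components in \eqref{parallelt}--\eqref{parallelperp}, and exploiting the full force of the $C^3$ hypothesis on $\gamma$ so that $\kappa,\tau\in C^1$ and the remainders genuinely live in $O(|(u,\beta)|^3)$. Pushing the matching one further order (which the $C^3$ regularity still permits) would actually give a continuous extension of the integrand to $D$, but is not needed for the lower bound claimed here.
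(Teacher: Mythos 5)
Your argument follows the same essential route as the paper's: isolate the coincidence set $s=t$, $\theta=\varphi$ (off which compactness and Remark~\ref{annullamento} give a uniform positive lower bound on $d^*$, a point the paper states more briefly), then Taylor-expand $|X-Y|^2$ and ${d^*}^2$ in the increments and exploit the cancellation of their leading terms. The one substantive difference is how far you push the expansion. The paper matches both the quadratic \emph{and} cubic terms ($A_2$ and $A_3$ coincide for the two quantities, with the first discrepancy at fourth order, $A_4$ versus $B_4$), so the integrand behaves like $(B_4-A_4)/A_2^2$ and is actually \emph{bounded} near the diagonal. You stop at quadratic matching, obtaining only an $O(|(u,\beta)|^{-1})$ bound, and then invoke the codimension-two geometry of the diagonal to conclude integrability. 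This buys you a lighter computation (you avoid the explicit third- and fourth-order coefficients relegated to the paper's Appendix) at the price of a weaker pointwise conclusion; both suffice for $F(T_r)>-\infty$. One caveat applies equally to both arguments but deserves to be made explicit in yours: the positive-definiteness of your quadratic form $Q$ (the paper's $A_2=\eta_1^2(1-r\kappa(s)\cos\theta)^2+r^2(\eta_2+\eta_1\tau(s))^2$) fails exactly when $1-r\kappa(s)\cos\theta=0$ at some point, and ruling this out requires the local non-interpenetration hypothesis $r\kappa<1$, which the paper invokes by assumption; your remark about degenerate loci only excludes the degeneracy on a positive-length arc, not at an isolated $(s,\theta)$, so you should state the hypothesis $r\,\max\kappa<1$ explicitly before asserting that the denominator is bounded below by $c\,|(u,\beta)|^4$.
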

	\begin{proof} Since integration is on bounded intervals, the boundedness from below of the integrand is sufficient. The integrand is given by the difference of two positive quantities, so it can degenerate to $-\infty$ only if the denominator of the second term goes to $0$ while the denominator of the first one is bounded away from zero or it converges to $0$ with a different order. From the previous Remark, ${d^*}^2$ tends to $0$ iff $X\to Y$ on the rod, i.e. $s\to t$ and $\theta \to \varphi$, then also $|X-Y|\to 0$. Both fractions diverge and we have to carefully study the behaviour of the integrand.\\
		We can assume $X(s,\theta)$, $Y(t,\varphi)$, with $t=s+\eta_1$ and $\varphi=\theta+\eta_2$
		where $\eta_i$ are respectively a small displacement along the centerline and a small variation in the angle on the cross-section.
		We also assume a positive torsion; with a negative torsion of $\gamma$ the proof is analogous. We expand the expression of $|X-Y|$ up to the fourth order and after long but straightforward calculations we find from \eqref{eq:Pdsth} and the orthogonality of $\boldsymbol{t},\boldsymbol{n},\boldsymbol{b}$
		$$|X-Y|^2=|p(r,s,\theta)-p(r,s+\eta_1,\theta+\eta_2)|^2=A_2+A_3+A_4+o_5(\eta_1,\eta_2)$$
		where $A_k$ denote the terms of $k$th degree in $\eta_1,\eta_2$ and $o_5$ is at least of fifth degree in $\eta_1,\eta_2$. We notice that
		$$\begin{aligned}
			A_2&=\eta_1^2+r^2\cos^2\theta\eta_1^2\kappa(s)^2-2\eta_1^2r\cos\theta \kappa(s)+r^2\eta_2^2+r^2\tau(s)^2\eta_1^2+2r^2\eta_1\eta_2\tau(s)=\\
			&=(\eta_1^2(1-r \kappa(s) \cos \theta)^2+r^2(\eta_2+\eta_1 \tau(s))^2.
		\end{aligned}
		$$
		The expressions of $A_k$ for $k=3,4$ are given in the Appendix.
		On the other hand,
		$$
		{d^*}^2(X,Y)=\int_{s}^{s+\eta_1}\!\!\!\!\!\!\!\!\!\!\!\! \left|1-r \cos \theta \kappa(\xi)\right|\, d\xi\int_{s}^{s+\eta_1}\!\!\!\!\!\!\!\!\!\!\! \left|1-r \cos (\theta +\eta_2)\kappa(\xi)\right|\, d\xi+r^2\left(\eta_2+\int_s^{s+\eta_1}\!\!\!\!\!\!\!\!\!\!\!\!\tau(\xi)\,d\xi\right)^2$$
		and expanding this up to the fourth order we find
		$${d^*}^2(X,Y)=A_2+A_3+B_4+o_5(\eta_1,\eta_2)$$
		where $A_k,B_k$ denote the terms of $k$th degree in $\eta_1,\eta_2$ and $o_5$ is at least of fifth degree in $\eta_1,\eta_2$. 
		While $A_2,A_3$ coincide with the above ones, $B_4$ is different and its expression is given in the Appendix. 
		Then, the integrand becomes $$\frac{1}{A_2+A_3+A_4+o_5(\eta_1,\eta_2^b)}-\frac{1}{A_2+A_3+B_4+o_5(\eta_1,\eta_2)}$$
		which behaves for $\eta_1,\eta_2\to0$ as
		$$\frac{B_4-A_4+o_5}{A_2^2+o_5}$$
		and which can diverge in $(\eta_1,\eta_2)=(0,0)$ if and only if $A_2$ tends to zero. But it is easily seen that $A_2$ is positive definite if $1-r \kappa(s) \cos \theta\neq 0$, which means that $r\kappa(s)<1$ which is true by assumption of local non-interpenetration of matter. Then the integrand is also bounded, since it is bounded when $\eta_1,\eta_2\to0$ and elsewhere bounded.
	\end{proof}  
	\begin{theorem}
		\label{divergence}
		The functional degenerates to $+\infty$ if and only if there is at least a point of self-contact between points belonging to $\operatorname{bd} T_r$, without interpenetration of matter.
	\end{theorem}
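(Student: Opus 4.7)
The statement is an ``if and only if'', so I address the two directions separately.

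\emph{Direction ($\Leftarrow$).} Assume $T_r$ presents self-contact without interpenetration: there exist distinct parameter triples $(s_0,\theta_0)\neq(t_0,\varphi_0)$ with $X_0:=p(r,s_0,\theta_0)=p(r,t_0,\varphi_0)=:Y_0$ on $\operatorname{bd}T_r$. The plan is to localise the argument to a neighbourhood $U$ of $(s_0,\theta_0,t_0,\varphi_0)$ in the four-dimensional parameter space and prove that the contribution to $F(T_r)$ coming from $U$ is already $+\infty$. Since the parameter triples are separated, Remark~\ref{annullamento} together with continuity of the quantities appearing in \eqref{dstar} supplies a constant $c_0>0$ with ${d^*}^2\geq c_0$ on $U$; hence $1/{d^*}^2$ is bounded there and the problem reduces to showing that $\int_U |X-Y|^{-2}\,dS\,dS=+\infty$. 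For this I would Taylor-expand $p$ around the contact point, exactly as in the proof of Theorem~\ref{order}, and use that the ``no interpenetration'' hypothesis forces tangential contact, i.e.\ coincidence of the tangent planes to $\operatorname{bd}T_r$ at the two parameterisations of $X_0=Y_0$ (otherwise the two boundary sheets would cross). Choosing local coordinates with two variables transverse to the common tangent plane and two variables along it, the quadratic part of $|X-Y|^2$ is a positive-definite form on the transverse plane; restricting to a two-dimensional slice obtained by freezing the tangential variables, the integrand reduces to the classical non-integrable kernel $1/r^2$ on a planar disk, and a Fubini--Tonelli argument transports the divergence to $U$.

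\emph{Direction ($\Rightarrow$), by contraposition.} Assume $\operatorname{bd}T_r$ is free of self-contact and show $F(T_r)<+\infty$. Split the four-dimensional domain of integration into a neighbourhood of the diagonal $\{(s,\theta)=(t,\varphi)\}$ and its complement. On the neighbourhood of the diagonal, the computation carried out in the proof of Theorem~\ref{order} already shows that the integrand is bounded, since the leading terms of $|X-Y|^2$ and of ${d^*}^2$ agree and the common denominator $A_2$ is kept away from zero by the local non-interpenetration condition $r\kappa(s)<1$. On the complement, which is a compact subset of $\bigl([0,L]\times[0,2\pi]\bigr)^2$, the continuous function $|X-Y|$ attains a strictly positive minimum (absence of self-contact) and therefore $|X-Y|^{-2}$ is bounded; by Remark~\ref{annullamento} the same is true for ${d^*}^{-2}$. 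A bounded integrand on a bounded domain yields $F(T_r)<+\infty$, as desired.

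\emph{Main obstacle.} The delicate step is the divergence estimate in the first direction. Tangentiality of the contact forces the Hessian of $|X-Y|^2$ at $(s_0,\theta_0,t_0,\varphi_0)$ to have rank two instead of four, so $|X-Y|^2$ vanishes quadratically only in two of the four coordinate directions and at higher order in the remaining two, a situation not encountered in O'Hara's two-dimensional analysis. Identifying the correct two-dimensional slice on which the planar kernel $1/r^2$ is genuinely non-integrable, and verifying that the higher-order remainders from the Taylor expansion do not wash out the logarithmic divergence, is the main technical point; the $C^3$ regularity assumed of $\gamma$ is what makes the expansion to the required order available.
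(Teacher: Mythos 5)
Your contrapositive argument for the direction ``divergence $\Rightarrow$ self-contact'' is sound and is essentially the paper's: near the diagonal the cancellation established in Theorem \ref{order} bounds the integrand, and away from the diagonal compactness together with the absence of self-contact bounds both terms separately. That half needs no changes.

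The gap is in the direction ``self-contact $\Rightarrow$ divergence'', exactly at the step you yourself flag as the main obstacle. The reduction to $\int_U|X-Y|^{-2}\,dS\,dS$ is fine (on $U$ the term $1/{d^*}^2$ is bounded away from $\infty$), and the tangency of the contact is correctly identified; but the Fubini--Tonelli step does not work. The two-dimensional transverse slice on which the integrand reduces to the non-integrable kernel $1/(u^2+v^2)$ is the single slice through the contact point, i.e.\ a null set of the two tangential variables; for every other frozen value the transverse integral is finite. Quantitatively, in the paper's own model of two perpendicular externally tangent cylinders, after the substitution $u=x+r\sin\varphi$, $v=y-r\sin\theta$ one gets
$$
\int\limits_{u^2+v^2\leq L^2}\frac{du\,dv}{u^2+v^2+r^2A^2}=\pi\log\left(1+\frac{L^2}{r^2A^2}\right),\qquad A^2=(1-\cos\theta)+(1-\cos\varphi)\sim\tfrac12(\theta^2+\varphi^2),
$$
and $\log\bigl(1+c/(\theta^2+\varphi^2)\bigr)$ is \emph{locally integrable} on a planar disk. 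So Tonelli produces a finite bound for the localized four-fold integral, not a lower bound of $+\infty$: the divergence of one slice does not transport. Be aware that the paper's proof lands on precisely this logarithm and then uses $\log(1+t)\geq Ct$, an inequality that degenerates as $t=L^2/(r^2A^2)\to\infty$, i.e.\ exactly near the contact, so you cannot simply defer to that step either. The divergence mechanism genuinely requires the kernel of the Hessian of $|X-Y|^2$ to meet the zero set of the fourth-order separation in a set of positive measure --- as happens for a line of contact (parallel axes), treated in the paper's final theorem --- whereas for an isolated nondegenerate tangential contact the slicewise computation above indicates convergence of $\int|X-Y|^{-2}$ when the exponent is $2$. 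Your proposal as written cannot close this direction, and any repair must confront the integrability of the logarithmic slice integral rather than the single degenerate slice.
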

	\begin{proof}
		It is sufficient to consider the case of a single isolated self-contact point, so we assume that there exist $A$, $B$ in $\operatorname{bd} T_r$ such that $|A-B|=0$, with $A(s,\theta)\neq B(t,\varphi)$ (that implies $s\neq t$). As pointed out in \cite{chamekh2009modeling,chamekh2020frictional,gonzalez1999global,gonzalez2002global,mlika2018nitsche,schuricht2004characterization}, without losing in generality we can consider two disjoint thin cylinders with perpendicular axes touching externally. Thus we take $X=(x,r\sin\theta,r-r\cos\theta)$ and $Y=(-r\sin\varphi,y,-r+r\cos\varphi)$, $\theta,\varphi\in [0,2\pi]$ and $x,y$ belonging to a suitable small interval $[-L,L]$, $L>0$.
		Now
		$$|X-Y|^2=(x+r\sin\varphi)^2+(y-r\sin\theta)^2+(r(1-\cos\theta)+r(1-\cos\varphi))^2$$
		Since $X\neq Y$ elsewhere, we consider only $$I=\int\limits_{-L}^L\int\limits_{-L}^L\int\limits_{-\pi}^{\pi}\int\limits_{-\pi}^{\pi}\frac{1}{|X-Y|^2}\,d\varphi\,d\theta\,dx\,dy.$$ 
		Using the inequality $(x\pm y)^2\leq 2(x^2+y^2)$, we obtain $$(x+r\sin\varphi)^2\leq 2(x^2+r^2\sin^2\varphi),\quad (y-r\sin\theta)^2\leq 2(y^2+r^2\sin^2\theta)$$
		and
		$$(r(1-\cos\theta)-r(1-\cos\varphi))^2\leq 2r^2((1-\cos\theta)^2+(1-\cos\varphi)^2).$$ 
		Replacing into $I$ and using standard trigonometric identities, we get
		$$I\geq\frac{1}{2}\int\limits_{-L}^L\int\limits_{-L}^L\int\limits_{-\pi}^{\pi}\int\limits_{-\pi}^{\pi}\frac1{x^2+y^2+r^2(1-\cos\theta)+r^2(1-\cos\varphi)}\,dx\,dy\,d\varphi\,d\theta.$$ 
		Since the integrand is positive, integration on $[-L,L]^2$ is greater than on $C=\{(x,y):x^2+y^2\leq L^2\}$. So 
		$$I\geq\frac{1}{2}\int_C\int\limits_{-\pi}^{\pi}\int\limits_{-\pi}^{\pi}\frac1{x^2+y^2+r^2A^2}
		\,d\varphi\,d\theta\,dx\,dy=:J,\quad \text{with}\,\,\, A^2=(1-\cos\theta)+(1-\cos\varphi).$$
		$J$ is easily computed with polar coordinates and it yields
		$$J=\frac{\pi}{2}\int\limits_{-\pi}^{\pi}\int\limits_{-\pi}^{\pi}\log\left(1+\frac{L^2}{r^2A^2}\right)\,d\theta\,d\varphi$$
		and since the length $L$ can be assumed small, by standard estimates we get
		$$I\geq\frac{C\pi}{2}\frac{L^2}{r^2}\int\limits_{-\pi}^{\pi}\int\limits_{-\pi}^{\pi}\frac{2}{\theta^2+\varphi^2}\,d\theta\,d\varphi$$
		which is positively divergent.
		\\
		Viceversa, if the functional is positively divergent, then this must be caused by its first integrand, so $|X-Y|=0$ and $X$ and $Y$ occupy the same position in space. Thus, it may be ${d^*}^2(X,Y)=0$ or ${d^*}^2(X,Y)\neq0$. In the first case, $X$ and $Y$ coincide also in $\Omega$ (i.e. have the same coordinates) and by Theorem \ref{order}, the functional is convergent. In the second case $X$ and $Y$ do not coincide in $\Omega$ and we have self-contact.
	\end{proof}
	By Theorem	\ref{divergence} and Corollary \ref{inj} it follows that the functional diverges also for neighbourhoods with interpenetration of matter.\\
	
	We denote by $\mathbb{T}$ the tubular neighbourhood of a circle $S^1$ in $\mathbb{R}^3$ of length $L$ and we define $\mathcal{L}=\{g: \mathbb{T}\to T_r[\gamma]\subset\mathbb{R}^3\}$.
	\begin{theorem}
		The functional $F:\mathcal{L}\to \mathbb{R}$ is continuous with respect to the $C^3$-topology of the centerline $\gamma$.  
	\end{theorem}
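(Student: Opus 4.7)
The plan is to apply Lebesgue's dominated convergence theorem directly to the quadruple integral \eqref{dio}. Let $\gamma_n \to \gamma$ in $C^3([0,L];\mathbb{R}^3)$ and denote by $f_n(s,\theta,t,\varphi)$ the integrand associated with $\gamma_n$. Continuity will follow once we establish pointwise convergence $f_n \to f$ almost everywhere together with a uniform bound $|f_n| \leq M$ on the compact cube $[0,L]^2 \times [0,2\pi]^2$.

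For pointwise convergence I would propagate $C^3$-closeness through the Serret-Frenet apparatus. Uniform convergence of $\gamma_n', \gamma_n'', \gamma_n'''$ yields uniform convergence of the frame $(\boldsymbol{t}_n, \boldsymbol{n}_n, \boldsymbol{b}_n)$, of the curvature $\kappa_n$, and of the torsion $\tau_n$ (the usual non-vanishing-curvature requirement of the limit $\gamma$ transferring to $\gamma_n$ for $n$ large by uniform convergence). Substituting into \eqref{eq:Pdsth} and into the explicit formulas \eqref{meridian}, \eqref{parallelt}, \eqref{parallelperp} gives uniform convergence of $|X_n - Y_n|^2$ and of ${d_n^*}^2$, hence pointwise convergence of $f_n$ away from the diagonal $\mathcal{D} = \{s=t,\ \theta=\varphi\}$.

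The main obstacle is producing a uniform dominant, since each term of $f_n$ diverges on $\mathcal{D}$. I would split the cube into a neighbourhood $U_\delta$ of $\mathcal{D}$ and its complement. On the complement, the absence of self-contact and interpenetration for $\gamma$ together with compactness give $\inf |X - Y| > 0$, and Remark \ref{annullamento} gives $\inf {d^*}^2 > 0$; uniform convergence then transfers both lower bounds to $\gamma_n$ for $n$ large, so $|f_n|$ is uniformly bounded there. On $U_\delta$ I would parametrize $t = s+\eta_1$, $\varphi = \theta+\eta_2$ and invoke the fourth-order expansion from the proof of Theorem \ref{order}: both $|X_n - Y_n|^2$ and ${d_n^*}^2$ share the same quadratic and cubic terms, with leading quadratic part
$$A_2^{(n)} = \eta_1^2\bigl(1 - r\kappa_n(s)\cos\theta\bigr)^2 + r^2\bigl(\eta_2 + \eta_1 \tau_n(s)\bigr)^2,$$
so $f_n$ behaves like $(B_4^{(n)} - A_4^{(n)})/(A_2^{(n)})^2$ near the diagonal. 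Local admissibility $r\kappa(s) < 1$ for $\gamma$ combined with uniform convergence of $\kappa_n$ supplies $\delta_0 > 0$ with $1 - r\kappa_n(s)\cos\theta \geq \delta_0$ for all large $n$ and all $s,\theta$; a short computation using the boundedness of $\tau_n$ then gives $A_2^{(n)} \geq c(\eta_1^2 + \eta_2^2)$ with $c$ independent of $n$, while the fourth-order coefficients are uniformly bounded by $C^3$-closeness. This yields $|f_n| \leq M$ on $U_\delta$ uniformly in $n$.

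Dominated convergence then gives $F(T_r[\gamma_n]) \to F(T_r[\gamma])$, which is the claimed continuity. The delicate ingredient is the uniform diagonal control: the cancellation between $|X - Y|^{-2}$ and ${d^*}^{-2}$ must persist under perturbation of $\gamma$, and this is precisely what the matched fourth-order expansion of Theorem \ref{order} guarantees, once $r\kappa_n$ is uniformly bounded away from $1$ --- the quantitative content of admissibility under $C^3$ perturbations.
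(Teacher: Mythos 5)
Your proof is correct in its essentials but takes a genuinely different route from the paper's. The paper argues directly: $|X-Y|^2$ and ${d^*}^2$ are continuous functions of $\gamma$ in the $C^3$-topology (through the Frenet frame, $\kappa$ and $\tau$), so by continuity of $x\mapsto 1/x$ each reciprocal is claimed to change by less than $\varepsilon/(8\pi^2L^2)$ uniformly in $u=(s,t,\theta,\varphi)$ under a small $C^3$-perturbation, and one then integrates. You instead pass to dominated convergence and, crucially, split the cube into a neighbourhood $U_\delta$ of the diagonal and its complement, invoking the matched fourth-order expansions of Theorem \ref{order} to build a uniform dominant near the diagonal. This extra care buys something real: on the diagonal both reciprocals blow up, $x\mapsto 1/x$ is not uniformly continuous near $0$, and the two separate uniform bounds the paper asserts cannot in fact hold near the diagonal --- the difference of the Euclidean reciprocals is of order $d/(\eta_1^2+\eta_2^2)$, with $d$ the $C^3$-distance, hence unbounded as $u$ approaches the diagonal at fixed $d$. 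Only the combined integrand stays controlled, thanks to the cancellation of the shared $A_2$ and $A_3$ terms, and making that control uniform under perturbation is exactly the step you supply; in this sense your argument repairs a gap in the published proof rather than merely rephrasing it. The one caveat you inherit from the paper is regularity: the coefficients $A_4$ and $B_4$ involve $\kappa''$ and $\tau''$, which are not available for a general $\gamma\in C^3$, so a fully rigorous diagonal estimate at the stated regularity would need integral remainder bounds in place of the pointwise Taylor expansion; this affects Theorem \ref{order} in the same way and is not specific to your proposal.
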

	\begin{proof}
		We want to prove that small deformations of $\gamma$ 
		produce small variations in the functional. For a given positive number $\varepsilon$ we have to show that there exist four positive numbers $d_k$, $k=0,\dots,3$ such that if $\eta$ is an homeomorphism of class $C^3$ with $|\eta'(t)|=1,\, \forall t$ such that
		$$|\gamma^{(k)}(t)-\eta^{(k)}(t)|<d_k\qquad\hbox{for all $t$ and for $k=0,\ldots,3$}$$
		then we have $|F(\gamma)-F(\eta)|<\varepsilon$.\\
		The Euclidean distance and ${d^*}^2$ are compositions of continuous functions of $\gamma$ and the Serret-Frenet frame, the curvature and the torsion of $\gamma$, which are continuous with respect to the $C^3$-topology of the centerline $\gamma$.\\
		Then, by the continuity of the function $x\mapsto 1/x$, $\forall \varepsilon>0$ there exists a neighbourhood $I\subseteq C^3([0,L], \mathbb{E}^3)$ such that for every $\eta\in I$ and $u=(s,t,\theta,\varphi)$,
		$$
		\left| \frac{1}{|X-Y|_\eta^2(u)}-\frac{1}{|X-Y|_\gamma^2(u)}  \right|<\frac{\varepsilon}{8\pi^2L^2},\,\, \left| \frac{1}{{d^*}^2(X,Y)_\eta(u)}-\frac{1}{{d^*}^2(X,Y)_\gamma(u)}  \right|<\frac{\varepsilon}{8\pi^2L^2}
		$$
		whence it follows 
		$$
		\left|\left(\frac{1}{|X-Y|_\eta^2(u)}-\frac{1}{{d^*}^2(X,Y)_\eta(u)}\right)-\left(\frac{1}{|X-Y|_\gamma^2(u)}-\frac{1}{{d^*}^2(X,Y)_\gamma(u)}\right)\right|<\frac{\varepsilon}{4\pi^2L^2}
		$$
		and integrating, we have $|F(\eta)-F(\gamma)|<\varepsilon$.
	\end{proof}
	\section{M\"obius energy as a particular case}
	In \cite{freedman1994mobius}, Freedman and al. proved that M\"obius energy presents a minimizer in each isotopy class, a so called ``optimal knot'', and that the circle is the global minimizer between all these classes.\\
	We show now that letting $r=0$ we obtain the particular case of the O'Hara energy functional \eqref{hara} for knots. We limit our treatment here to simple considerations when $r=0$, but a further study could include results on $\Gamma$-convergence of our functional to O'Hara one when $r\to 0$.\\
	If we assume that the radius of the neighbourhood vanishes, we have the following modifications.
	\begin{enumerate}
		\item the tubular neighbourhoods are contracted to knots described by the curve $\gamma$, that can be now taken in $C^2([0,L],\mathbb{R}^3)$ as proposed by O'Hara in \cite{o1991energy};
		\item the notion of interpenetration of matter loses meaning: indeed, a knot does not have internal points. The space of knots is subdivided into cells, where the region plotted in orange in Figure \ref{isotopy} does not exist anymore. The boundary of the green region in each cell, i.e. the configuration of self-contact of rods is expanded and coincides with  the boundary of the cell itself: it indicates the coincidence between points on the knot, i.e. a singular knot with at least a double point, as in O'Hara's treatment;
		\item the functional reduces to \eqref{hara}, i.e. to O'Hara's one; the Euclidean distance between points on $\operatorname{bd} T_r$ becomes the Euclidean distance between points on the knot and ${d^*}^2$ becomes the squared difference between the curvilinear abscissas that describe the position of $X$ and $Y$.
	\end{enumerate}
	\section{Discussion on the exponent}
	Many generalizations of M\"obius energy were studied in the last two decades, involving in particular the use of a different exponent for the two denominators of the functional or for the integrand, and the minimization of the energy or its coupling with an elastic energy on the knot, see for details \cite{o1992family,kim1993torus,freedman1994mobius,o1994energy,buck1995simple,simon1996energy,von1998minimizing,o2008energy,blatt2013stationary}. We carry out here a brief discussion on the possible exponent that can be chosen inside the integrand, recovering the same result obtained for the M\"obius energy.\\
	In the definition of the energy we used an exponent $2$ for the Euclidean distance, and the same was intuitively done also for $d^*$. We can now study which are the exponents that guarantee divergence of the functional only in presence of a self-contact point: in particular, for O'Hara's energy in \eqref{hara}, as exponent one can choose $\alpha\in [2,3)$: $\alpha<3$ provides a compensation of the two fractions in the integrand when $X\to Y$, while $\alpha\geq 2$ assures divergence for a double point.\\
	The same result is valid also for tubular neighbourhoods.
	\begin{theorem}
		The energy functional \begin{equation}
			\label{funcexp}	F(T_r)=\int_{\operatorname{bd} T_r \times \operatorname{bd} T_r}\left(\frac{1}{|X-Y|^\alpha}-\frac{1}{({d^*}^2(X,Y))^{\frac{\alpha}{2}}}\right)dS\,dS
		\end{equation} is bounded on tubular neighbourhoods without self-contact or interpenetration of matter if $\alpha\in (0,3)$, $\alpha\in \mathbb{R}$. Moreover, the functional diverges for a finite number of self-contact points on $\operatorname{bd} T_r$ if $\alpha\geq 2$ and converges for $\alpha<2$. Finally, if there is a line of self-contact points, \eqref{funcexp} diverges also for $\alpha\in [1,2)$.
	\end{theorem}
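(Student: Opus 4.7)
The plan is to split the statement into three independent claims and address each by extending the techniques already developed for $\alpha=2$ in Section 5.

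For the boundedness claim when $\alpha\in(0,3)$ and the tube has neither self-contact nor interpenetration, I would generalize the local asymptotic analysis of Theorem \ref{order}. Writing $\eta_1=t-s$ and $\eta_2=\varphi-\theta$, the expansions $|X-Y|^2=A_2+A_3+A_4+o_5(\eta_1,\eta_2)$ and ${d^*}^2=A_2+A_3+B_4+o_5(\eta_1,\eta_2)$ share the quadratic and cubic parts, so a Taylor expansion of $u\mapsto u^{-\alpha/2}$ yields
\begin{equation*}
\frac{1}{|X-Y|^\alpha}-\frac{1}{({d^*}^2)^{\alpha/2}}=-\frac{\alpha}{2}\cdot\frac{A_4-B_4}{A_2^{\alpha/2+1}}+\text{higher-order terms},
\end{equation*}
which behaves like $|\eta|^{2-\alpha}$ near the diagonal since $A_4-B_4=O(|\eta|^4)$ and $A_2\sim|\eta|^2$ by the local non-interpenetration hypothesis $r\kappa(s)<1$. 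Integrated on the two-dimensional transverse disc this is finite for the stated range of $\alpha$; far from the diagonal both $|X-Y|$ and $d^*$ are bounded below and the integrand is continuous on a compact domain, so the total integral is finite.

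For the isolated-contact claim, I would keep the perpendicular-cylinder model of Theorem \ref{divergence} but keep $\alpha$ as a parameter. With $X=(x,r\sin\theta,r-r\cos\theta)$ and $Y=(-r\sin\varphi,y,-r+r\cos\varphi)$, the inequality $(a\pm b)^2\leq 2(a^2+b^2)$ together with the identity $\sin^2\theta+(1-\cos\theta)^2=2(1-\cos\theta)$ gives a lower bound of the type
\begin{equation*}
\frac{1}{|X-Y|^\alpha}\geq\frac{C}{\bigl(x^2+y^2+r^2[(1-\cos\theta)+(1-\cos\varphi)]\bigr)^{\alpha/2}}.
\end{equation*}
Polar integration in $(x,y)$ reduces the singular contribution to an integral over $(\theta,\varphi)$ whose kernel behaves like a power or logarithm of $A^2=(1-\cos\theta)+(1-\cos\varphi)\sim(\theta^2+\varphi^2)/2$, which is non-integrable on a two-dimensional disc precisely when $\alpha\geq 2$. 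Since $d^*$ is bounded away from zero in a neighborhood of the contact (the two touching points carry distinct curvilinear labels), the subtracted term $({d^*}^2)^{-\alpha/2}$ cannot cancel the divergence; for $\alpha<2$ the same integration produces a finite upper bound, and summing over finitely many contact points preserves global finiteness.

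For the line-contact claim, a continuum of self-contact forms a one-dimensional curve $\Gamma\subset\operatorname{bd}T_r$, so $|X-Y|$ vanishes on a two-dimensional subset of the four-dimensional product rather than on a zero-dimensional one. In local coordinates adapted to $\Gamma$, the direction tangent to $\Gamma$ contributes an extra integration factor comparable to the length of $\Gamma$ while the kernel effectively depends on three rather than four variables, shifting the critical exponent down by one: divergence holds for $\alpha\geq 1$, and in particular on $[1,2)$. The main obstacle is the second step: to turn the perpendicular-cylinder model into a rigorous local estimate on the actual tube one must control the remainder arising from approximating the true midlines by their tangent lines uniformly on a neighborhood of the contact, and verify that $({d^*}^2)^{-\alpha/2}$ stays bounded there so that the positive divergence produced by the first term is not cancelled.
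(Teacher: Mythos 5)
Your first and third parts track the paper. For boundedness on contact-free tubes you use exactly the paper's device: the shared expansions $A_2+A_3+A_4+o_5$ and $A_2+A_3+B_4+o_5$ from Theorem \ref{order}, a Taylor expansion of the difference giving a numerator of lowest degree $2+\alpha$ over a denominator of lowest degree $2\alpha$, hence an integrand of order $|\eta|^{2-\alpha}$ near the diagonal, which is handled on the two-dimensional transversal for $\alpha<3$. For the line of contact, the paper replaces your dimension-counting heuristic with an explicit model of two parallel cylinders, which reduces the singular part to $\iint_{[-L,L]^2}|x_1-x_2|^{-\alpha}\,dx_1\,dx_2$, divergent precisely for $\alpha\geq1$; your ``shift the critical exponent down by one'' argument reaches the same threshold but needs to be made concrete in exactly this way. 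Your remark that $({d^*}^2)^{-\alpha/2}$ must be shown to remain bounded near the contact is correct and is how the paper isolates the term $I_\alpha$.

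The genuine gap is in the isolated-contact step, and it is not merely a matter of rigour: the estimate you propose does not produce the stated threshold. After bounding $|X-Y|^2$ from above by $C\bigl(x^2+y^2+r^2A^2\bigr)$ with $A^2=(1-\cos\theta)+(1-\cos\varphi)\sim(\theta^2+\varphi^2)/2$ and integrating out $(x,y)$ in polar coordinates, the resulting kernel in $(\theta,\varphi)$ is $\pi\log\bigl(1+L^2/(r^2A^2)\bigr)$ for $\alpha=2$ and of order $A^{2-\alpha}$ for $\alpha>2$. Both are integrable on a two-dimensional disc around the origin, since $\int_0\sigma\log(1/\sigma)\,d\sigma<\infty$ and $\int_0\sigma^{2-\alpha}\,\sigma\,d\sigma<\infty$ for all $\alpha<4$; so your route yields a \emph{finite} lower bound for every $\alpha<4$ and proves nothing for $\alpha\in[2,3)$. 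The paper performs the integrations in the opposite order: after the change of variables $u=x+r\sin\varphi$, $v=y-r\sin\theta$ it first integrates the angular variables over a disc $C_\varepsilon$ and bounds that inner integral from below by $c_\varepsilon\pi\varepsilon^2(u^2+v^2)^{-\alpha/2}$, so that the surviving planar integral $\iint(u^2+v^2)^{-\alpha/2}\,du\,dv$ is what produces divergence exactly for $\alpha\geq2$. Be aware that even this inequality is delicate in the regime $u^2+v^2\to0$, where the inner integral actually behaves like $(u^2+v^2)^{(2-\alpha)/2}$ rather than $(u^2+v^2)^{-\alpha/2}$; the order of integration and the precise form of this lower bound are the crux of the whole part, and your version skips exactly the point where the exponent $2$ is supposed to emerge.
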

	\begin{proof}
		As explained above in Theorem \ref{order}, the energy is in general bounded for such objects without self-contact and the unique case that one has to carefully study is when $(t,\varphi)\to (s,\theta)$, i.e. we have to prove that $|X-Y|^\alpha \to 0$ and ${d^*}^2(X,Y)^{\alpha/2} \to 0$ with the same order. We have, using the same notation of the proof of Theorem \ref{order},
		$$
		F^\alpha(T_r)=\int_{\operatorname{bd} T_r}\int_{\operatorname{bd} T_r} \frac{1}{(|X-Y|^2)^\frac{\alpha}{2}}-\frac{1}{{d^*}^2(X,Y)^{\frac{\alpha}2}}
		$$
		$$=\int_{\operatorname{bd} T_r}\int_{\operatorname{bd} T_r} \frac{1}{(A_2+A_3+A_4+o_5(\eta_1,\eta_2)_{|\cdot |})^{\frac\alpha2}}-\frac{1}{(A_2+A_3+B_4+o_5(\eta_1,\eta_2)_{{d^*}^2}))^{\frac{\alpha}{2}}}=
		$$
		$$=\int_{\operatorname{bd} T_r}\int_{\operatorname{bd} T_r}\frac{(A_2+A_3+B_4+o_5(\eta_1,\eta_2)_{{d^*}^2})^\frac{\alpha}{2}-(A_2+A_3+A_4+o_5(\eta_1,\eta_2)_{|\cdot |})^\frac{\alpha}{2}}{[(A_2+A_3+A_4+o_5(\eta_1,\eta_2)_{|\cdot |})(A_2+A_3+B_4+o_5(\eta_1,\eta_2)_{{d^*}^2})]^{\frac{\alpha}{2}}} 
		$$
		We study the integrand in a neighbourhood of $(\eta_1,\eta_2)=(0,0)$, then having all bounded quantities, in both the numerator and the denominator we retain the polynomials with lower degree: the denominator of the integrand is a polynomial with the smallest degree equal to $4\frac{\alpha}{2}=2\alpha$, while with an expansion of the numerator, we have a polynomial with smallest degree equal to $2+\alpha$.\\
		Then, to have a convergent integral we must require $2+\alpha+1> 2\alpha$, then for $\alpha < 3$ the functional is finite for tubular neighbourhoods without self-contact or interpenetration of matter.\\
		For the proof of the divergence in case of at least a self-contact point with $\alpha\geq 2$, we proceed as in Theorem \ref{divergence}. Suppose there exists an isolated point of self-contact. We study the behaviour of the integral
		$$
		I_\alpha=\int\limits_{-L}^L\int\limits_{-L}^L\int\limits_{-\pi}^\pi\int\limits_{-\pi}^\pi\frac{1}{|X-Y|^\alpha}\,d\varphi\,d\theta\,dx\,dy=:\int\limits_{[-L,L]^2\times[-\pi.\pi]^2}\frac{1}{D^\alpha}\,d\varphi\,d\theta\,dx\,dy
		$$
		Notice first that if $\Omega_L=[-L,L]^2\times[-\pi.\pi]^2$ and $\Omega_\varepsilon=[-\rho\varepsilon,\rho\varepsilon]^2\times[-\varepsilon,\varepsilon]^2$, then
		$$I_\alpha=\int_{\Omega_\varepsilon}\frac{1}{D^\alpha}+\int_{\Omega_L\setminus\Omega_\varepsilon}\frac{1}{D^\alpha}=:J^\varepsilon_\alpha+k_\varepsilon\geq J^\varepsilon_\alpha$$
		with $0<k_\varepsilon<+\infty$ since $D^\alpha$ is bounded away from $\Omega_\varepsilon$. Therefore $J^\alpha_\varepsilon\leq I_\alpha=J^\varepsilon_\alpha+k_\varepsilon$ and then $I_\alpha$ is divergent iff $J^\varepsilon_\alpha$ is divergent.
		Now, as before,
		$$|X-Y|^2=(x+r\sin\varphi)^2+(y-r\sin\theta)^2+(r(1-\cos\theta)+r(1-\cos\varphi))^2$$
		and setting $\xi=x/r$, $\eta=y/r$, $J^\varepsilon_\alpha$ reduces to 
		\begin{equation}
			\label{eq:Jalfa}
			J^\varepsilon_\alpha=\frac{1}{r^\alpha}\!\!\!\!\int\limits_{[-\varepsilon,\varepsilon]^4}\!\!\!\!\frac{1}{\left((\xi+\sin\varphi)^2+(\eta-\sin\theta)^2+((1-\cos\theta)+(1-\cos\varphi))^2\right)^\frac{\alpha}{2}}\,d\varphi\,d\theta\,dx\,dy 
		\end{equation}
		We study now when $J^\varepsilon_\alpha$ diverges. We have 
		$$
		(2-\cos \theta-\cos \varphi)^2=2\left(\sin^2\frac{\theta}{2}+\sin^2\frac{\varphi}{2}\right)^2\leq 2\left(\frac{\theta}{2}+\frac{\varphi}{2}\right)^2\leq 4\left(\frac{\theta^2}{4}+\frac{\varphi^2}{4}\right)\leq \theta^2+\varphi^2
		$$
		By the change of coordinates
		$$
		u=\xi+\sin \varphi,\quad v=\eta-\sin \theta, \quad z=\varphi,\quad t=\theta
		$$
		with Jacobian determinant equal to $1$, $J^\varepsilon_\alpha$ satisfies
		$$J^\varepsilon_\alpha\geq\int\frac{1}{(u^2+v^2+z^2+t^2)^{\frac{\alpha}{2}}}\,dz\,dt\,du\,dv
		$$
		with $u,v\in [-\varepsilon-\sin \varepsilon,\varepsilon+\sin \varepsilon]\supseteq[-\varepsilon,\varepsilon]$, $z,t \in [\varepsilon,\varepsilon]$. Thus,
		\begin{equation}\label{eq:divJalpha}
			J^\varepsilon_\alpha\geq \iint\limits_{[-\varepsilon,\varepsilon]^2}\iint_{C_\varepsilon}\frac{1}{(A^2+z^2+t^2)^{\frac{\alpha}{2}}}\,dz\,dt\,du\,dv
		\end{equation}
		with $A^2=u^2+v^2$ and $C_\varepsilon=\{(z,t):z^2+t^2\leq \varepsilon^2\}$. The integral on $C_\varepsilon$ is easy to compute and yields for $\alpha\neq2$ (the case $\alpha=2$ was treated in Theorem \ref{divergence})
		$$\frac{2\pi}{(2-\alpha)}A^{2-\alpha}\left(\left(1+\frac{\varepsilon^2}{A^2}\right)^\frac{2-\alpha}{2}-1\right)\geq c_\varepsilon \pi \varepsilon^2\frac{1}{A^\alpha}=c_\varepsilon\frac{\pi \varepsilon^2}{(u^2+v^2)^\frac{\alpha}{2}}
		$$
		for $\varepsilon$ sufficiently small (but finite) and $c_\varepsilon>0$.
		Thus,
		\begin{equation}\label{eq:divJalpha2}
			J_\alpha^\varepsilon\geq c_\varepsilon\pi\varepsilon^2\int\limits_{-\varepsilon}^\varepsilon\int\limits_{-\varepsilon}^\varepsilon \frac{1}{(u^2+v^2)^\frac{\alpha}{2}}\,du\,dv  
		\end{equation}
		which is easily seen to be divergent if $\alpha-1\geq 1$, i.e. $\alpha\geq 2$.
		
		To discuss convergence, we neglect the term $(2-\cos\theta-\cos\varphi)^2$ in the denominator in \eqref{eq:Jalfa} and we have
		$$J^\varepsilon_\alpha\leq\frac{1}{r^\alpha}\int_{[-\varepsilon,\varepsilon]^4}\frac{1}{\left((\xi+\sin\varphi)^2+(\eta-\sin\theta)^2\right)^\frac{\alpha}{2}}\,d\varphi\,d\theta\,dx\,dy$$
		and by the same coordinate change as above this turns into
		$$J^\varepsilon_\alpha\leq\frac{1}{r^\alpha}\iint\limits_{[-\varepsilon,\varepsilon]^2}\iint_{C_{\sqrt2\varepsilon}}\frac{1}{(u^2+v^2)^{\frac{\alpha}{2}}}\,du\,dv\,dz\,dt\leq\frac{8\pi\varepsilon^3}{r^\alpha}\int_0^{\sqrt 2\varepsilon}\frac{1}{\sigma^{\alpha-1}}\,d\sigma$$
		by an analogous reasoning as above. So, $J^\varepsilon_\alpha$ converges (if there is only one point of self-contact) if $\alpha<2$.
		
		It remains to investigate when the integral diverges if there is a line of points of contact on $\operatorname{bd}T_r$. Without loss of generality, we can suppose to have contact between two parallel cylinders of length $L$ along the $x$-axis. With analogous notation as above, we have $X(x_1,r\sin\theta,r(1-\cos\theta))$ and $Y(x_2,r\sin\varphi,r(1-\cos\varphi))$ and
		$$\|X-Y\|^2=(x_1-x_2)^2+r^2(\sin\theta-\sin\varphi)^2+r^2(2-\cos\theta-\cos\varphi)^2.$$
		Now it is no more possible to restrict to $[-\varepsilon,\varepsilon]^2$ on the $(x_1,x_2)$ coordinates, but still on the angles. Then $I_\alpha$ will diverge if 
		$$J^\varepsilon_\alpha=\iint\limits_{[-L,L]^2}\iint\limits_{[-\varepsilon,\varepsilon]^2}\frac{1}{[(x_1-x_2)^2+(\sin\theta-\sin\varphi)^2+(2-\cos\theta-\cos\varphi)^2]^{\alpha/2}}\,dx_1\,dx_2\,d\theta\,d\varphi$$
		diverges. With the same reasoning as above, it is easy to see that if $\varepsilon$ is small enough, then
		$$J^\varepsilon_\alpha\geq\iint_{[-L,L]^2}\iint_{[-\varepsilon,\varepsilon]^2}\frac{C}{(B^2+\theta^2+\varphi^2)^{\alpha/2}}\,dx_1\,dx_2\,d\theta\,d\varphi$$
		where $B^2=(x_1-x_2)^2$ and $C$ is a positive constant depending on $\varepsilon$. The right-hand side is analogous to the right-hand side of \eqref{eq:divJalpha} and therefore, as in \eqref{eq:divJalpha2}
		$$
		J_\alpha^\varepsilon\geq D\int\limits_{-L}^L\int\limits_{-L}^L \frac{1}{(x_1-x_2)^\alpha}\,dx_1\,dx_2
		$$
		fo a suitable positive constant $D$, including in this case also $\alpha=2$. This is  easily seen to be divergent if $\alpha\geq 1$ by switching to the coordinates $\xi=x_1-x_2,\eta=x_1+x_2$. This completes the proof.
	\end{proof}
	\begin{remark}
		\emph{The above results show that if the total energy (for generic $\alpha<3$) diverges, then this must be ascribed to the divergence of the term with the Euclidean distance if $\alpha\geq 2$ and there is at least a point of self-contact. However, in this case the divergence of the integral cannot distinguish between a single (or finite) point of contact and a line of contact. On the other side, if $1\leq \alpha<2$ and the integral diverges, then there must be a line of points of contact.} 
	\end{remark}
	\begin{remark}
		\emph{The result is the same as in \cite{freedman1994mobius} for a finite number of self-contact intersections: the admissible range for $\alpha$ is $[2,3)$, that provides a finite energy for knots without double points and diverges in the case of loss of injectivity.}
	\end{remark}
	\begin{remark}
		\emph{In the setting of knots, it is easy to see that in O'Hara's functional in \eqref{hara} with general exponent $\alpha$, for nearby points on a circle, one is faced with the limit
		$$
		\lim_{x\to 0}\frac{1}{2^\alpha R^\alpha\left|\sin\left(\frac{x}{2}\right)\right|^\alpha}-\frac{1}{R^\alpha x^\alpha}
		$$ 
		that is finite for $\alpha\in (0,3)$. \\
		One could do the same  on the surface of a torus with exponents $\alpha=1,3,4$ (the general case is far too complicated). After long calculations that involve the expansions of incomplete and complete elliptic integrals of the first, second and third kind the limit turns out to be bounded for $\alpha=1$ and unbounded in the other two cases, as expected.}
	\end{remark}
	\section{Final remarks}
	The object of a forthcoming work could be to prove, similarly to what was done with M\"obius energy, but using the Direct Method of Calculus of Variations, that in each isotopy class the functional is minimal on the neighbourhood of an ideal optimal centerline, while the torus with circular midline (analogous of a circular knot for M\"obius energy) is the minimum between all the possible configurations. Such a result could be carried out expressing the functional in terms of the central curve and its Serret-Frenet frame, $(\gamma,(\boldsymbol{t},\boldsymbol{n},\boldsymbol{b}))\in W^{1,2}([0,L],\mathbb{R}^3)$, using then Calculus of Variations and Schuricht's results instead of tools of knot theory and M\"obius transformations. However, the research for a necessary condition to prove the coercivity of the functional is quite difficult.
	\begin{appendices}
	\section{Computations for the proof of Theorem \ref{order}}
	$$\scalemath{1}{\begin{aligned}
			A_3&=\eta_1^2 (\eta_2 r^2 \tau'(s) \cos^2\theta + \eta_2 \kappa(s) r \sin\theta - 
			\eta_2 \kappa(s)^2 r^2 \cos\theta \sin\theta + 
			\eta_2 r^2 \tau'(s) \sin^2\theta)+\\
			&+\eta_1^3 (-\kappa'(s) r \cos\theta + \kappa(s) \kappa' (s)r^2 \cos^2\theta + 
			r^2 \tau'(s) \tau(s) \cos^2\theta + r^2 \tau'(s)\tau(s)\sin^2\theta);\\
			A_4&=\eta_1^4 \left(\frac{1}{3} \kappa''(s) \kappa(s) r^2 \cos ^2\theta -\frac{1}{3} \kappa''(s) r \cos \theta -\frac{1}{12} \kappa(s)^4 r^2 \cos ^2\theta +\frac{1}{6} \kappa(s)^3 r \cos \theta\right.-\frac{1}{12} \kappa(s)^2 r^2 \tau(s)^2 \sin^2\theta+ \\
			&-\frac{1}{6} \kappa(s)^2 r^2 \tau(s)^2 \cos ^2\theta -\frac{1}{6} \kappa(s)^2 r^2 \tau'(s) \sin \theta  \cos \theta -\frac{\kappa(s)^2}{12}+\frac{1}{6} \kappa(s) \kappa'(s) r^2 \tau(s)  \sin \theta  \cos \theta+ \\
			&+\frac{1}{6} \kappa(s) r \tau'(s) \sin \theta 
			+\frac{1}{6} \kappa(s) r \tau(s)^2 \cos \theta +\left.\frac{1}{4} \kappa'(s)^2 r^2 \cos ^2\theta+\frac{1}{3} r^2 \tau''(s) \tau(s) \sin ^2\theta -\frac{r^2 \tau(s)^4}{12}+\frac{r^2 \tau'(s)^2}{4}\right)+\\
			&+\eta_1^3 \left(-\frac{1}{3} \eta_2 \kappa(s)^2 r^2 \tau(s) -\eta_2 \kappa(s) \kappa'(s) r^2 \sin \theta  \cos \theta +\frac{1}{3} \eta_2 \kappa(s) r \tau(s)  \cos \theta+\frac{2}{3} \eta_2 \kappa'(s) r \sin \theta -\frac{1}{3} \eta_2 r^2 \tau(s)^3  +\right.\\
			&\left.+\frac{1}{3} \eta_2 r^2 \tau''(s) \sin ^2\theta\right)+\eta_1^2 \left(-\frac{1}{2} \eta_2^2 \kappa(s)^2 r^2 \cos ^2\theta +\frac{1}{2} \eta_2^2 \kappa(s) r \cos \theta -\frac{1}{2} \eta_2^2 r^2 \tau (s)^2\right)+\frac{1}{3} \eta_1 \eta_2^3 r^2 \tau(s)-\frac{\eta_2^4 r^2}{12};
	\end{aligned}}
	$$
	$$\scalemath{1}{\begin{aligned}
			B_4&=\eta_1^4 \left(\frac{1}{3} \kappa''(s) \kappa(s) r^2 \cos ^2\theta-\frac{1}{3} \tau''(s) r \cos \theta+\frac{1}{4} \kappa'(s)^2 r^2 \cos ^2\theta+\frac{1}{3} r^2 \tau''(s) \tau(s) +\frac{r^2 \tau'(s)^2}{4}\right)\\
			&+\eta_1^3 \left(-\eta_2 \kappa(s) \kappa'(s) r^2 \sin \theta) \cos \theta+\frac{1}{2} \eta_2 \kappa'(s) r \sin \theta+\frac{1}{3} \eta_2 r^2 \tau'(s)\right)+\eta_1^2 \left(\frac{1}{2} \eta_2^2 \kappa(s) r \cos \theta-\frac{1}{2} \eta_2^2 \kappa(s)^2 r^2 \cos ^2\theta\right).
	\end{aligned}}$$
\end{appendices}
	\section*{Acknowledgements}
	CL and AM are supported by Gruppo Nazionale per la Fisica Matematica (GNFM) of Istituto Nazionale per l'Alta Matematica (INdAM).
	\nocite{*} 
	\bibliographystyle{acm}

\end{document}